\numberwithin{equation}{section}
\declaretheoremstyle[bodyfont=\it,qed=\qedsymbol]{noproofstyle}
\declaretheorem[numberlike=equation]{observation}
\declaretheorem[name=Observation,numbered=no]{observation*}
\declaretheorem[numberlike=equation]{theorem}
\declaretheorem[name=Theorem,numbered=no]{theorem*}
\declaretheorem[numberlike=equation]{lemma}
\declaretheorem[name=Lemma,numbered=no]{lemma*}
\declaretheorem[numberlike=equation]{corollary}
\declaretheorem[name=Corollary,numbered=no]{corollary*}
\declaretheorem[name=Proposition,numbered=no]{proposition*}
\declaretheorem[numberlike=equation]{claim}
\declaretheorem[name=Claim,numbered=no]{claim*}
\declaretheorem[name=Conjecture,numbered=no]{conjecture*}
\declaretheorem[name=Question,numbered=no]{question*}
\declaretheoremstyle[bodyfont=\it,qed=$\lozenge$]{defstyle} 
\declaretheorem[numberlike=equation,style=defstyle]{definition}
\declaretheorem[unnumbered,name=Definition,style=defstyle]{definition*}
\declaretheorem[unnumbered,name=Example,style=defstyle]{example*}
\declaretheorem[unnumbered,name=Notation=defstyle]{notation*}
\declaretheorem[unnumbered,name=Construction,style=defstyle]{construction*}
\declaretheorem[unnumbered,name=Remark,style=defstyle]{remark*}
\def\dim{\mathsf{Dim}}
\def\eval{\mathsf{Eval}}
\def\span{\mathsf{Span}}
\def\deg{\mathsf{Deg}}
\def\NW{\mathsf{NW}}
\def\NWL{\mathsf{NW \circ Lin}}
\def\Perm{\mathsf{Perm}}
\def\Det{\mathsf{Det}}
\def\acc{\mathsf{ACC^0}}
\def\sp{\mathsf{\#P}}
\def\nexp{\mathsf{NEXP}}
\def\acp{\mathsf{AC^0[p]}}
\newcommand{\shortECCC}[2]{\texttt{\href{http://eccc.hpi-web.de/report/\ifnumcomp{#1}{>}{93}{19}{20}#1/#2/}{eccc:TR#1-#2}}}
\newcommand{\parseECCC}[1]{
\StrSubstitute{#1}{TR}{}[\tmpstring]%
\IfSubStr{\tmpstring}{/}{ 
\StrBefore{\tmpstring}{/}[\ecccyear]%
\StrBehind{\tmpstring}{/}[\ecccreport]%
}{
\StrBefore{\tmpstring}{-}[\ecccyear]%
\StrBehind{\tmpstring}{-}[\ecccreport]%
}%
\shortECCC{\ecccyear}{\ecccreport}}
\title{Functional lower bounds for arithmetic circuits and connections to boolean circuit complexity}
\author{
Michael A. Forbes\thanks{Department of Computer Science, Princeton University. Research supported by the Princeton Center for Theoretical Computer Science. E-mail :  \texttt{miforbes@csail.mit.edu}}
\and
Mrinal Kumar\thanks{Rutgers University, Research supported in part by the  Simons Graduate Fellowship. Part of this work was done while on  an internship at MSR, New England. E-mail : \texttt{mrinal.kumar@rutgers.edu}
}
\and
Ramprasad Saptharishi\thanks{Tel Aviv University, The research leading to these results has received funding from the European Community's Seventh Framework Programme (FP7/2007-2013) under grant agreement number 257575. E-mail : \texttt{ramprasad@cmi.ac.in}
}
}
\begin{document}
\maketitle

\begin{abstract}
We say that a circuit $C$ over a field $\F$ \emph{functionally} computes an $n$-variate polynomial $P \in \F[x_1, x_2, \ldots, x_n]$ if for every $\vecx \in \{0,1\}^n$ we have that $C(x) = P(x)$. This is in contrast to \emph{syntactically} computing $P$, when $C \equiv P$ as formal polynomials. In this paper, we study the question of proving lower bounds for homogeneous depth-$3$ and depth-$4$ arithmetic circuits for functional computation. We prove the following results : 
\begin{itemize}
\itemsep 0pt
\item Exponential lower bounds homogeneous depth-$3$ arithmetic circuits for a polynomial in $\VNP$. 
\item Exponential lower bounds for homogeneous depth-$4$ arithmetic circuits with bounded individual degree for a polynomial in $\VNP$. 
\end{itemize}
Our main motivation for this line of research comes from our observation that strong enough functional lower bounds for even very special depth-$4$ arithmetic circuits for the Permanent imply a separation between $\sp$ and $\acc$. Thus, improving the second result to get rid of the \emph{bounded individual degree} condition could lead to substantial  progress in boolean circuit complexity. Besides, it is known from a recent result of  Kumar and Saptharishi~\cite{KumarSaptharishi15} that over constant sized finite fields, strong enough \emph{average case} functional lower bounds for homogeneous depth-$4$ circuits imply superpolynomial lower bounds for homogeneous depth-$5$ circuits. 

Our proofs are based on a family of new complexity measures called \emph{shifted evaluation dimension}, and might be of independent interest. \end{abstract}

\thispagestyle{empty}
\newpage
\pagenumbering{arabic}

\section{Introduction}
Arithmetic circuits are one of the most natural models of computation for studying computation with multivariate polynomials. One of the most fundamental questions in this area of research is to show that there are low degree polynomials which cannot be efficiently computed by \emph{small sized} arithmetic circuits. However, in spite of the significance of this question, progress on it has been sparse and our current state of understanding of lower bounds for arithmetic circuits continues to remain extremely modest. 

Most of the research in algebraic complexity theory so far considers arithmetic circuits and multivariate polynomials as \emph{formal} objects and studies the complexity of \emph{syntactic} representation of polynomials  over the underlying field. However, in this work,  we aim to study the \emph{semantic} or \emph{functional} analogue of the complexity of computing multivariate polynomials. We formally define this notion below and then try to motivate the definition based on our potential applications. 
\begin{definition}[Functional equivalence]\label{def: functional equivalence}
 Let $\F$ be any field and let $D$ be a subset of $\F$. We say that two $n$-variate polynomials $P_1$ and $P_2$ in $\F[x_1, x_2, \ldots, x_n]$ are \emph{functionally} equivalent over the domain $D^n$ if 
\[
 \forall \vecx \in D^n\spaced{,} P_1(\vecx) = P_2(\vecx)\qedhere
\]
\end{definition}
This definition of functional equivalence naturally extends to the case of arithmetic circuits functionally computing a family of polynomials, as defined below. 
\begin{definition}[Functional computation]\label{def: functional computation}
 Let $\F$ be any field and let $D$ be a subset of $\F$. A circuit family $\{C_n\}$ is said to functionally compute a family of polynomials $\{P_n\}$ over the domain $D^n$ if 
\[
 \forall n\in \N, \vecx \in D^n\spaced{,} C_n(\vecx) = P_n(\vecx)\qedhere
\]
\end{definition}
Having defined functional computation, we will now try to motivate the problem of proving functional lower bounds for arithmetic circuits. 
\subsection{Motivation}

\paragraph{Improved boolean circuit lower bounds: }
In the late 80's there was some spectacular progress on the question of lower bounds for bounded depth boolean circuits. In particular, Razborov and Smolensky~\cite{smolensky87, razborov87} showed exponential lower bounds for constant depth boolean circuits with AND $(\wedge)$, OR $(\vee)$, Negations $(\neg)$ and $\mod p$ gates for a prime $p$ (i.e the class of $\acp$ circuits). However, the question of proving lower bounds for constant depth boolean circuits which also have $\mod q$ gates for a composite $q$ (i.e the class of general $\acc$ circuits) remained wide open. In general, one major obstacle was that the techniques of Razborov and Smolensky failed for composite moduli, and we could not find alternative techniques which were effective for the problem. 
Although it is widely believed that the the majority function should be hard for such circuits, till a few years ago, we did not even know to show that there is such a language in $\nexp$\footnote{The class of problems in nondeterministic exponential time.}.  In a major breakthrough on this question, Williams~\cite{w11} showed that there is a function in $\nexp$ which requires $\acc$ circuits of superpolynomial size. Along with the result itself, the paper introduced a new proof strategy for showing such lower bounds. However, it still remains wide open to show that there is a function in deterministic exponential time, which requires $\acc$ circuits of superpolynomial size. 

One of our main motivations for studying functional lower bounds for arithmetic circuits is the following lemma which shows that such lower bounds in fairly modest set up would imply a separation between $\sp$ and $\acc$. A formal statement and a simple proof  can be found in \autoref{sec: functional lb and acc}. 

\begin{lemma}[Informal]~\label{lem: acc to functional lb-intro}
Let $\F$ be any field of characteristic zero or at least $\exp\left(\omega\left(\poly(\log n)\right)\right)$. Then, a functional lower bound of $\exp\left(\omega\left(\poly(\log n)\right)\right)$  for the permanent of an $n\times n$ matrix over $\{0,1\}^{n^2}$ 
for depth-$4$ arithmetic circuits with bottom fan-in $\poly(\log n)$ imply that $\sp \neq \acc$.  
\end{lemma}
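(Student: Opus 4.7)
The plan is to prove the contrapositive: assuming $\sp \subseteq \acc$, I will construct a depth-$4$ arithmetic circuit of quasi-polynomial size with bottom fan-in $\poly(\log n)$ that functionally computes the permanent on $\{0,1\}^{n^2}$, contradicting the hypothesized functional lower bound.

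First, since the permanent is $\sp$-complete and $\sp \subseteq \acc$ by assumption, every bit of the integer permanent (viewed as a boolean function of the $n^2$ entries) is computed by an $\acc$ circuit of $\poly(n)$ size. By the classical theorem of Beigel--Tarui (building on Yao and Allender--Gore), every $\acc$ circuit of size $s$ can be simulated by a depth-two $\mathsf{SYM}\circ\mathsf{AND}$ circuit of size $2^{\poly(\log s)}$ whose $\mathsf{AND}$ gates have fan-in $\poly(\log s)$. Applying this bit-by-bit, each bit of $\Perm$ is computed by such a $\mathsf{SYM}\circ\mathsf{AND}$ circuit of quasi-polynomial size with polylogarithmic bottom fan-in.

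Next, I arithmetize each of these circuits. A negation $\neg x_\ell$ is replaced by $1-x_\ell$, and each $\mathsf{AND}$ gate is replaced by the product of its $\poly(\log n)$ literals, yielding a bottom layer of $\Pi$-gates of fan-in $\poly(\log n)$. Since a $\mathsf{SYM}$ gate on $m$ inputs $y_1,\ldots,y_m$ depends only on $\sum_j y_j$, I represent it via Lagrange interpolation as $\sum_{k=0}^{m} c_k \left(\sum_{j} y_j\right)^k$, producing a $\Sigma\Pi\Sigma$ structure above the bottom $\Pi$-layer. The result is a $\Sigma\Pi\Sigma\Pi$ (depth-$4$) arithmetic circuit for each bit of the permanent, of quasi-polynomial size with bottom fan-in $\poly(\log n)$. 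Writing $\Perm(X)=\sum_i 2^i b_i(X)$ and absorbing the weighted outer sum into the top $\Sigma$-layer then gives a single depth-$4$ $\Sigma\Pi\Sigma\Pi$ arithmetic circuit of quasi-polynomial size and bottom fan-in $\poly(\log n)$ that functionally agrees with $\Perm$ on $\{0,1\}^{n^2}$. The hypothesized functional lower bound of $\exp(\omega(\poly(\log n)))$ rules this out, proving $\sp\neq\acc$.

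The delicate point will be the characteristic hypothesis. The Lagrange interpolation arithmetizing each $\mathsf{SYM}$ gate of fan-in $m = 2^{\poly(\log n)}$ introduces denominators dividing $m!$, so the arithmetization is faithful precisely when the characteristic is either zero or larger than $m$; this is exactly what the hypothesis $\exp(\omega(\poly(\log n)))$ guarantees. In characteristic zero the sum $\sum_i 2^i b_i$ recovers the integer permanent directly, while in large positive characteristic $p$ it recovers $\Perm \bmod p$, which is precisely the value that the formal permanent polynomial attains on a $\{0,1\}$-matrix over $\F$; in both cases the constructed circuit functionally equals $\Perm$ on the boolean cube, as required.
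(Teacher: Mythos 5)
Your argument matches the paper's proof (\autoref{thm: acc to sym} together with \autoref{lem: acc to functional ub} and \autoref{lem: acc to functional lb}): both invoke the Yao/Beigel--Tarui $\mathsf{SYM}\circ\mathsf{AND}$ collapse of $\acc$, arithmetize the symmetric gate as a univariate polynomial applied to the sum of the arithmetized bottom gates, and use the characteristic hypothesis to make the interpolation legal. You also correctly spell out the bit-recombination $\Perm=\sum_i 2^i b_i$, a step the paper's informal statement leaves implicit, and your expansion $\sum_k c_k\bigl(\sum_j y_j\bigr)^k$ in fact yields the slightly stronger $\Sigma\!\wedge\!\Sigma\Pi$ (sum-of-powers) structure that the paper highlights.
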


In fact, we show that something slightly stronger is true. It suffices to prove functional lower bounds for the model of sums of powers of low degree polynomials for the conclusion in \autoref{lem: acc to functional lb-intro} to hold. 

At this point, there are two possible interpretations of the statement of \autoref{lem: acc to functional lb-intro}. For an optimist, it provides another approach to proving new lower bounds for $\acc$, while for a pessimist it points to the fact that the functional lower bounds for depth-$4$ arithmetic circuits could be possibly very challenging. What makes us somewhat optimistic about this strategy is the fact that in the last few years, we seem to have made substantial progress on the question of proving lower bounds for homogeneous depth-$4$ circuits in the syntactic setting \cite{gkks13, FLMS13, KLSS, KS14}. In particular, even though the depth-$4$ circuits obtained in the proof of \autoref{lem: acc to functional lb-intro} are not homogeneous, an exponential lower bound for sums of powers of low degree polynomials is known  in the syntactic set up. Therefore, it makes sense to try and understand if these bounds can be extended to the functional set up as well.

\paragraph{Lower bounds for homogeneous depth-$5$ circuits: }
In a recent work by Kumar and Saptharishi~\cite{KumarSaptharishi15},  it was shown that over constant size finite fields, \emph{average case functional } lower bounds for homogeneous depth-$4$ circuits\footnote{in fact, with bounded bottom fan-in} implies lower bounds for homogeneous depth-$5$ circuits. More precisely, the following lemma was shown:

\begin{lemma}[\cite{KumarSaptharishi15}]\label{lem: avg case to depth 5}
Let $\F_q$ be a finite field such that $q = O(1)$. Let $P$ be a homogeneous polynomial of degree $d$ in $n$ variables over $\F_q$, which can be computed by a  homogeneous depth-$5$ circuit of size at most $O\left( \exp{\left( d^{0.499}\right)} \right)$. Then, there exists a homogeneous depth-$4$ circuit $C'$ of bottom fan-in $O(\sqrt{d})$ and top fan-in at most $O\left( \exp{\left( d^{0.499}\right)} \right)$ such that 
\[
\Pr_{x \in \F_q^n}\left[P(x) \neq C'(x)\right] \leq \exp(-\Omega(\sqrt{d}))\qedhere
\]
\end{lemma}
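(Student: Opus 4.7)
The plan is to view the homogeneous depth-$5$ circuit computing $P$ as $C = \sum_i \prod_j Q_{ij}$, where each inner sub-circuit $Q_{ij} = \sum_k M_{ijk}$ is a $\Sigma\Pi$ expression with $M_{ijk} = \prod_\ell L_{ijk,\ell}$ a product of linear forms, and to functionally approximate each bottom monomial $M_{ijk}$ by an expression realizable as a $\Pi$-gate of fan-in at most $O(\sqrt{d})$. The partition is driven by the rank $r_{ijk}$ of the $\F_q$-span of $\{L_{ijk,\ell}\}_\ell$: if $L'_1,\ldots,L'_{r_{ijk}}$ is a basis of this span, then as $x$ varies uniformly in $\F_q^n$ the tuple $(L'_1(x),\ldots,L'_{r_{ijk}}(x))$ is uniform in $\F_q^{r_{ijk}}$, so $\Pr_x[M_{ijk}(x) \neq 0] = (1-1/q)^{r_{ijk}}$.

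When $r_{ijk} > \sqrt{d}$ I would simply replace $M_{ijk}$ by $0$; the per-gate functional error is at most $(1-1/q)^{\sqrt{d}} = \exp(-\Omega(\sqrt{d}))$, using that $q = O(1)$. When $r_{ijk} \leq \sqrt{d}$, $M_{ijk}$ is a monomial $(L'_1)^{a_1}\cdots (L'_{r_{ijk}})^{a_{r_{ijk}}}$ in a basis of independent forms, and Fermat's identity $a^q = a$ on $\F_q$ lets me reduce each exponent modulo $q-1$ to obtain a functionally equivalent monomial of total degree at most $r_{ijk}(q-1) = O(\sqrt{d})$, i.e.\ a $\Pi$-gate of fan-in $O(\sqrt{d})$. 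Substituting these back, each $Q_{ij}$ becomes a sum of bottom-fan-in-$O(\sqrt{d})$ products, the whole circuit becomes a $\Sigma\Pi\Sigma\Pi$ with bottom fan-in $O(\sqrt{d})$, and the top fan-in (number of children of the top $\Sigma$) is unchanged.

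For the error, a union bound over the at most $\exp(d^{0.499})$ bottom $\Pi$-gates gives total functional disagreement at most $\exp(d^{0.499}) \cdot \exp(-\Omega(\sqrt{d})) = \exp(-\Omega(\sqrt{d}))$, since $\sqrt{d} = d^{0.5}$ dominates $d^{0.499}$. The main obstacle I anticipate is \emph{preserving homogeneity}: in the low-rank case the Fermat collapse strictly lowers the degree of $M_{ijk}$ whenever $r_{ijk} < d_{ij}$, so the replaced gates need not be homogeneous and the surrounding $\prod_j$ may fail to be a degree-$d$ polynomial. Resolving this likely requires either a more careful case split (for instance, handling separately the sub-case where every $M_{ijk}$ inside a given $Q_{ij}$ has rank equal to its degree, so the Fermat step is a no-op), or a ``degree-padding'' trick using the functional identity $L^{q-1} \equiv \mathbbm{1}[L \neq 0]$ to re-inflate the degree of each replaced monomial — the latter introduces a further per-gate error of $O(1/q)$ for each padding factor, which must be kept within the $\exp(-\Omega(\sqrt{d}))$ budget via the union bound.
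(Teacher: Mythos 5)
Note first that \autoref{lem: avg case to depth 5} is imported from \cite{KumarSaptharishi15}; the present paper contains no proof of it, so your proposal is being evaluated on its own terms and against my understanding of that reference. Your high-level plan is correct and, to the best of my knowledge, is indeed the one in \cite{KumarSaptharishi15}: classify each bottom product $M_{ijk}=\prod_\ell L_{ijk,\ell}$ by the rank $r_{ijk}$ of the span of its factors, discard those with $r_{ijk}>\sqrt d$ (each of which is nonzero on at most a $(1-1/q)^{\sqrt d}=\exp(-\Omega(\sqrt d))$ fraction of $\F_q^n$), reduce the low-rank ones via $a^q=a$, and union-bound over the at most $\exp(d^{0.499})$ bottom gates. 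The error accounting and the observation that the top fan-in is preserved are both fine, although ``$\Pr_x[M_{ijk}(x)\neq 0]=(1-1/q)^{r_{ijk}}$'' should be an inequality $\le$: the non-basis factors can also vanish, which only helps you.

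There is, however, a concrete error in the low-rank branch. You assert that when $r_{ijk}\le\sqrt d$ the product $M_{ijk}$ is a single monomial $(L'_1)^{a_1}\cdots(L'_{r_{ijk}})^{a_{r_{ijk}}}$ in a basis $L'_1,\dots,L'_{r_{ijk}}$; that is false in general, since each factor $L_{ijk,\ell}$ is only a \emph{linear combination} of the basis forms (e.g.\ $M=x(x+y)$ has rank $2$ but is not a power product in $x,y$). The correct move is to view $M_{ijk}=g(L'_1,\dots,L'_{r_{ijk}})$ for a polynomial $g$ that is a product of linear forms in $r_{ijk}$ variables, reduce $g$ modulo the ideal $\langle y_i^q-y_i : i\le r_{ijk}\rangle$ to obtain a functionally equivalent $g'$ of individual degree at most $q-1$, and then expand $g'(L'_1,\dots,L'_{r_{ijk}})$ into monomials in the original variables. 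This gives a bottom $\Sigma\Pi$ piece with $\Pi$-fan-in at most $r_{ijk}(q-1)=O(\sqrt d)$ as desired, but with up to $q^{r_{ijk}}\cdot n^{O(\sqrt d)}$ summands feeding the third-level $\Sigma$; this inflates the circuit size, though, as you note, it leaves the top fan-in of the resulting $\Sigma\Pi\Sigma\Pi$ circuit untouched, so the stated conclusion survives. Your worry about homogeneity is legitimate, and you are right to reject the $L^{q-1}$-padding fix: each padding factor incurs a constant $\Theta(1/q)$ error, far exceeding the $\exp(-\Omega(\sqrt d))$ budget. What the construction actually guarantees is formal degree at most $d$ (since every replacement only lowers degrees) together with bottom fan-in $O(\sqrt d)$, which is precisely what the downstream shifted-partials upper bound consumes; the word ``homogeneous'' in the cited statement should be read in that weaker, formal-degree sense rather than as literal homogeneity of $C'$.
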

 
Informally, the lemma shows that over small finite fields strong enough \emph{average case} functional lower bounds for homogeneous depth-$4$ arithmetic circuit with bounded bottom fan-in are sufficient to show superpolynomial lower bounds for homogeneous depth-$5$ circuits. Even though in \cite{KumarSaptharishi15}, the authors do not take this route to eventually prove their lower bounds, 
this connection seems like a strong motivation to study the question of proving functional lower bounds for bounded depth arithmetic circuits.  

\paragraph{Functional lower bounds for bounded depth arithmetic circuits: }
It is immediately clear from the definition that \emph{syntactic} computation implies \emph{functional} computation, 
but vice-versa may not be necessarily true. In this sense, proving lower bounds for functional computation could be potentially 
harder than proving lower bounds for syntactic computation. From this point of view, once we have syntactic lower bounds for a certain class of circuits, it seems natural to ask if these bounds can be extended to the functional framework as well. The last few years have witnessed substantial progress on the question of proving lower bounds for variants of depth-$4$ arithmetic circuits, and in 
this work we explore the question of whether these bounds can be extended to the functional setting. 

\paragraph{Applications to proof complexity lower bounds :}
Functional lower bounds have recently found applications for obtaining lower bounds for algebraic proof systems.  In particular, Forbes, Shpilka, Tzameret, and Wigderson~\cite{FSTW15} have given lower bounds in various algebraic circuit measures for any polynomial agreeing with certain functions of the form $\vecx\mapsto\frac{1}{p(\vecx)}$, where $p$ is a constant-degree polynomial (which is non-zero on the boolean cube). In particular, they used such lower bounds to obtain lower bounds for the various subclasses of the Ideal Proof System (IPS) of Grochow and Pitassi~\cite{GrochowPitassi14}.

In the next section, we explore the connections between syntactic and functional computation in a bit more detail, and discuss why the 
techniques used in proving syntactic lower bounds do not seem conducive to prove lower bounds in the functional setting. Hence, the problem of proving functional lower bounds might lead us to more techniques for arithmetic circuit lower bounds. 
 
 \subsection{Functional vs syntactic computation}
We now discuss the differences and similarities between functional and syntactic computation in a bit more detail. The following observation is easy to see.
\begin{observation}\label{obs: properties of semantic equivalence}
The following properties follow from \autoref{def: functional computation}:
 \begin{itemize}
\item Any two polynomials $P_1$ and $P_2$ which are syntactically equivalent are also functionally equivalent for every choice of $D$.
\item If two polynomials of individual degrees bounded by $d$ are functionally equivalent over any domain of size at least $d+1$, then they are also syntactically equivalent. 
\item In particular, any two multilinear polynomials which are functionally equivalent over the hypercube $\{0,1\}^n$ are also syntactically equivalent. 
\end{itemize}
\end{observation}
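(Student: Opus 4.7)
The three bullets are all essentially standard polynomial-identity facts, so the plan is mostly a matter of unpacking definitions and invoking the univariate root bound. I would organize the argument as follows.

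For the first bullet, I would simply note that syntactic equivalence $P_1 \equiv P_2$ means that $P_1$ and $P_2$ are identical as formal objects in $\F[x_1,\ldots,x_n]$, so their evaluation maps $\F^n \to \F$ are literally the same function. In particular they agree on every $\vecx \in D^n$ for any $D \subseteq \F$, which is exactly \autoref{def: functional equivalence}.

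For the second bullet, the natural move is to set $Q := P_1 - P_2$ and show that if $Q$ has individual degree at most $d$ in each variable and vanishes on $D^n$ with $|D| \geq d+1$, then $Q \equiv 0$ as a formal polynomial. I would prove this by induction on the number of variables $n$. The base case $n=1$ is exactly the classical fact that a nonzero univariate polynomial of degree at most $d$ has at most $d$ roots in $\F$, so more than $d$ roots forces the polynomial to be identically zero. For the inductive step, I would write
\[
Q(x_1,\ldots,x_n) = \sum_{i=0}^{d} x_n^i \cdot Q_i(x_1,\ldots,x_{n-1}),
\]
fix an arbitrary $(a_1,\ldots,a_{n-1}) \in D^{n-1}$, and regard $Q(a_1,\ldots,a_{n-1},x_n)$ as a univariate polynomial in $x_n$ of degree at most $d$. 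Since this univariate vanishes on all of $D$, the base case forces each coefficient $Q_i(a_1,\ldots,a_{n-1})$ to be $0$. As $(a_1,\ldots,a_{n-1})$ ranged over all of $D^{n-1}$, each $Q_i$ is an $(n-1)$-variate polynomial of individual degree at most $d$ vanishing on $D^{n-1}$; the inductive hypothesis then yields $Q_i \equiv 0$ for every $i$, hence $Q \equiv 0$.

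The third bullet is just the specialization of the second to the parameters $d = 1$ (multilinearity means individual degree at most $1$) and $D = \{0,1\}$ (which has size exactly $d+1 = 2$), so no new argument is needed. There is really no obstacle in this proof; the only nontrivial ingredient is the univariate root bound used in the base case of the induction, and the rest is bookkeeping.
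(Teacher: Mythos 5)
Your proof is correct; all three bullets are exactly what the paper has in mind when it introduces this observation with ``the following observation is easy to see'' and then omits any proof. The second bullet is the standard one-variable-at-a-time induction reducing to the univariate root bound, and the first and third are immediate, so there is nothing to add or compare.
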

For the rest of the paper, our domain of interest will be $D = \{0,1\}$ and we will be interested in polynomials which are functionally the same over the hypercube $\{0,1\}^n$. For brevity, for the rest of the paper, when we say that two polynomials are functionally equivalent, we mean that the domain is  the hypercube. As an additional abuse of notation, when we say that a circuit $C$ is functionally equivalent to a polynomial $P$, we mean that for every $\vecx \in \{0,1\}^n$, $C(\vecx) = P(\vecx)$. Observe that functional equivalence over the hypercube is precisely the same as syntactic equivalence when we work modulo the ideal generated by the polynomials $\{x_i^2-x_i : i \in [n]\}$. However, we find the functional view easier and more convenient to work with.   

At this point, one might ask why is the choice of $D$ as $\{0,1\}$ a natural one? The motivation for studying a domain of size $2$ stems from the fact that most of the polynomials for which we have syntactic arithmetic circuit lower bounds, are multilinear. For instance, the permanent ($\Perm$), the determinant ($\Det$), the Nisan-Wigderson polynomials ($\NW$) and the iterated matrix multiplication polynomial ($\mathsf{IMM}$) are known to be hard for many natural classes of arithmetic circuits, homogeneous depth three circuits being one such class. Since for any $D\subseteq \F$ such that $|D| \geq 2$, $D^n$ is an interpolating set for multilinear polynomials, it seems natural to ask if there is a small homogeneous depth three arithmetic circuit which is functionally equivalent to any of these polynomials. 

Another reason why $\{0,1\}^n$ seems a natural domain to study functional algebraic computation is due to potential connections to boolean circuit lower bounds. It seems natural to ask if the techniques discovered in the quest for arithmetic circuit lower bounds can be adapted to say something interesting about questions in  boolean circuit complexity. And, \autoref{lem: acc to functional lb-intro} seems like an encouraging step in this direction.

\subsubsection{Functional lower bounds and partial derivatives }\label{sec: functional lb and partial derivatives}
Almost all the bounded depth arithmetic circuit lower bounds so far have been proved using techniques based on the partial derivatives of a polynomial. This includes exponential lower bounds for homogeneous depth-$3$ circuits \cite{nw1997} and  lower bounds for homogeneous depth-$4$ arithmetic circuits~\cite{gkks13, FLMS13, KLSS, KS14}. At a high level, the proofs have the following structure:
\begin{itemize}
\item Define a function $\Gamma : \F[\vecx] \rightarrow \N$, called the complexity measure, which serves as an indicator of the hardness of a polynomial. 
\item For all \emph{small} arithmetic circuits in the model of interest, show that $\Gamma$ has a non-trivial upper bound. 
\item For the target hard polynomial, show that $\Gamma$ is large. Comparing this with the upper bound in step $2$ leads to a contradiction if the hard polynomial had a small arithmetic circuit. 
\end{itemize}
The precise measure $\Gamma$ used in these proofs varies, but they all build upon the the notion of partial derivatives of a polynomial. 
The idea is to define $\Gamma(P)$ to be the dimension of a linear space of polynomials defined in terms of the partial derivatives of $P$. In the syntactic set up, if a circuit $C$ computes a polynomial $P$, then any partial derivative of $C$  must be equivalent to the corresponding partial derivative of $P$. This observation along with bounds on the dimension of the partial derivative based linear spaces, led to circuit lower bounds.

However, this clearly breaks down in the case when our only guarantee is that the circuit $C$ and the polynomial $P$ agree as functions on all of $\{0,1\}^n$. Apriori, it is not clear if we can say anything meaningful about how the partial derivatives of $C$ and those of $P$ are related to each other. An extreme case of this is the following example. Let the polynomials $P$ and $Q$ be defined as follows: 
\[
P = \left(\sum_{i = 1}^n x_i \right)^n
\]
and 
\[
Q = P \mod I_0
\]
Here $I_0$ is the ideal generated by the polynomials $\{x_i^2-x_i : i \in [n]\}$. The following items follow easily from the definitions:
\begin{itemize}
\item $\forall \vecx \in \{0,1\}^n, P(\vecx) = Q(\vecx)$.
\item The dimension of the span of partial derivatives of $P$ is at most $n$.
\item The dimension of the span of partial derivatives of $Q$ is at least $2^n$. This follows from the fact that the leading monomial of $Q$ is $x_1\cdot x_2 \cdots x_n$. 
\end{itemize}
So, clearly the dimension of the partial derivatives of two polynomials which are functionally the same over $\{0,1\}^n$ can be wildly different. Thus, it seems tricky to extend the proofs of syntactic lower bounds to the functional setup. Nevertheless, we do manage to get around this obstacle in certain cases as our results in the next section show. Moreover, we also show that a general solution to this question offers a possibility of proving new lower bounds for boolean circuits, that have so far been beyond our reach so far.

\subsection{Our results}
We now state our main results. 

As our first result, we show functional lower bounds for homogeneous\footnote{Our lower bounds require that the formal degree of the circuit and the degree of the polynomial are \emph{close} to each other. Homogeneity guarantees this condition, but is a much stronger condition than what we need for our proofs to work. } depth-$3$ circuits. In the syntactic setting such lower bounds were first shown by Nisan and Wigderson \cite{nw1997} using the partial derivative of a polynomial as the complexity measure. However, as we discussed in \autoref{sec: functional lb and partial derivatives}, partial derivative based proofs do not extend to the functional setting in a 
straightforward manner. We get around this obstacle by working with a different but related complexity measure. We now formally state the theorem : 
\begin{theorem}\label{thm: depth 3 lower bound}
Let $\F$ be any field. There exists a family $\{P_d\}$ of polynomials of degree $d$ in $n = \poly(d)$ variables in $\VNP$ such that any $\Sigma\Pi\Sigma$ circuit of formal degree $d$ which is functionally equivalent to $P_d$ over $\{0,1\}^n$ has size at least $\exp\left( \Omega\left(d\log n \right)\right)$.
\end{theorem}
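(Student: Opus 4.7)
The strategy is to introduce a complexity measure that, unlike partial derivatives, is invariant under functional equivalence on $\{0,1\}^n$, and then to run the standard three-step ``complexity measure'' argument of \cite{nw1997}. The natural candidate, which I will call the \emph{shifted evaluation dimension}, is: for $f\in\F[\vecx]$, viewed as its multilinear representative modulo $\langle x_i^2-x_i\rangle$, and parameters $k,\ell$, set
\[
\mathsf{SED}_{k,\ell}(f) \;:=\; \dim \mathsf{Span}\Bigl\{\, x^\alpha \cdot f\!\!\mid_{x_S \leftarrow \vecbeta}\;:\; S\subseteq[n],\ |S|=k,\ \vecbeta\in\{0,1\}^S,\ |\alpha|\le\ell\,\Bigr\},
\]
where each element is again reduced modulo $\langle x_i^2-x_i\rangle$. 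Since $f\!\mid_{x_S\leftarrow\vecbeta}$, as an element of $\F[\vecx]/\langle x_i^2-x_i\rangle$, is determined by the values $\{f(y):y\in\{0,1\}^n,\ y_S=\vecbeta\}$, the measure $\mathsf{SED}_{k,\ell}$ depends only on $f\!\mid_{\{0,1\}^n}$. This is precisely the property that failed for partial derivatives, as illustrated by the example $P=(\sum_i x_i)^n$ in \autoref{sec: functional lb and partial derivatives}.

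With the measure in hand, I would prove three things. \emph{First, an upper bound:} for any $\Sigma\Pi\Sigma$ circuit $C=\sum_{i=1}^s T_i$ of formal degree $d$, with each $T_i=\prod_{j\le d}\ell_{ij}$, we have $\mathsf{SED}_{k,\ell}(C)\le s\cdot \max_i \mathsf{SED}_{k,\ell}(T_i)$, and each $\mathsf{SED}_{k,\ell}(T_i)$ is small because boolean restrictions preserve the product-of-affine-forms structure -- $T_i\!\mid_{x_S\leftarrow\vecbeta}$ is itself a product of $\le d$ affine forms in the surviving variables -- and after the degree-$\ell$ shift and multilinear reduction the result lies in a controllably small subspace of the multilinear polynomials of degree $\le d+\ell$. \emph{Second, a lower bound on an explicit hard target:} a Nisan--Wigderson-style polynomial $P_d=\NW$ built from a combinatorial design of low pairwise intersection lies in $\VNP$ and has large $\mathsf{SED}_{k,\ell}$; the low-intersection property should guarantee that the shifted restrictions produce many distinguishable leading multilinear monomials, so that contributions from distinct NW monomials cannot destructively collide. \emph{Third, a balancing step:} choosing $k$ and $\ell$ to separate the two bounds forces $s\ge\exp(\Omega(d\log n))$.

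The principal obstacle will be the upper bound on $\mathsf{SED}_{k,\ell}$ for a single product gate. In the syntactic setting the product rule immediately controls partial derivatives of $\prod_j\ell_{ij}$, but for arbitrary boolean restrictions no such clean identity is available, and a priori the family $\{T_i\!\mid_{x_S\leftarrow\vecbeta}\}_{S,\vecbeta}$ could look very different from a family of simple partial derivatives. The likely route is a structural observation: either that the affine restrictions $\{\ell_{ij}\!\mid_{x_S\leftarrow\vecbeta}\}_j$ vary in a low-dimensional family as $(S,\vecbeta)$ ranges, or that one can first relate shifted evaluations of the \emph{formal} polynomial $T_i$ to its shifted partial derivatives and only then pass to the multilinear quotient -- thereby avoiding the exponential blow-up exhibited by $Q=P\bmod I_0$ in \autoref{sec: functional lb and partial derivatives}. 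Producing an upper bound on a single gate tight enough that $s\cdot(\text{gate bound})$ remains below the $\NW$ lower bound for some choice of $k,\ell$ is where the bulk of the technical work will lie.
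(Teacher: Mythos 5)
Your high-level plan is the right one, and the idea of a complexity measure invariant under functional equivalence on $\{0,1\}^n$ is exactly what the paper uses. But the measure you propose and the one the paper uses are not the same, and the difference is precisely where your ``principal obstacle'' lives.

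You let the restricted set $S$ range over \emph{all} size-$k$ subsets of $[n]$ and substitute for just those $k$ variables. The paper instead fixes once and for all a partition $\vecx=\vecy\sqcup\vecz$ and substitutes boolean values for \emph{all} of $\vecy$ (with Hamming weight at most $k$). This is what makes the upper bound for a single product gate $T=\prod_{i\le d}L_i$ trivial: write each affine form as $L_i=L_{yi}(\vecy)+L_{zi}(\vecz)$; then for every $\veca\in\{0,1\}^{n_y}$ one has $T(\veca,\vecz)=\prod_i\bigl(L_{yi}(\veca)+L_{zi}(\vecz)\bigr)$, and the $\vecz$-parts $L_{zi}$ do not depend on $\veca$. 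Expanding the product shows every such restriction lies in the span of $\bigl\{\prod_{i\in T}L_{zi}:T\subseteq[d]\bigr\}$, a fixed space of dimension at most $2^d$, so $\Gamma^{\mathrm{SED}}_{k,0}(C)\le s\cdot 2^d$ by subadditivity. Under your definition, the linear part of $\ell_{ij}\!\mid_{x_S\leftarrow\vecbeta}$ changes as $S$ changes, so the restricted products do \emph{not} all live in one $2^d$-dimensional space; a naive union over the $\binom{n}{k}$ choices of $S$ ruins the bound. Fixing the partition is the missing ingredient, not an incidental choice.

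Two smaller remarks. First, for depth three the paper takes $\ell=0$: no shift at all is needed, and reaching for shifts here only adds slack; the shift is needed only later, in the depth-$4$ argument. Second, your lower bound for $\NW$ is hand-waved; the paper's argument is sharper and worth internalizing. It picks $e=d/2-1$ and $k=e+1=d/2$, lets $\vecy$ be the first $k$ rows of the variable matrix, and observes that because $\NW$ is set-multilinear, the restriction $\NW(\vece,\vecz)$ for a set-multilinear $\vece$ of degree $k$ is exactly the $k$-th partial derivative $\partial_{\vecy^{\vece}}\NW$ (Observation~\ref{lem: multilinear equivalence}). Since two distinct polynomials of degree $<d/2$ over $\F_m$ cannot agree on $d/2$ points, these $m^{d/2}$ derivatives are pairwise distinct monomials in $\vecz$, giving $\Gamma^{\mathrm{SED}}_{k,0}(\NW)\ge m^{d/2}$ without any genericity or ``distinguishable leading monomial'' argument. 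Comparing $m^{d/2}$ with $s\cdot 2^d$ then gives $s\ge m^{d/2}/2^d=\exp(\Omega(d\log n))$.
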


As our second result, we show similar functional analogues of the homogeneous depth-$4$ lower bounds of \cite{KLSS,KS14} but under the restriction that the depth-$4$ circuit computes a polynomial of \emph{low individual degree}. As discussed in the introduction, such lower bounds for depth-$4$ circuits with bounded bottom fan-in but unbounded individual degree would imply that $\sp \neq \acc$, and would be a major progress on the question of boolean circuit lower bounds. 
\begin{theorem}\label{thm: depth 4 lower bound}
Let $\F$ be any field. There exists a family $\{P_d\}$ of polynomials of degree $d$ in $n = \poly(d)$ variables in $\VNP$ such that any $\Sigma\Pi\Sigma\Pi$ circuit of formal degree $d$ and individual degree $O(1)$ which is functionally equivalent to $P_d$ over $\{0,1\}^n$ has size at least $\exp\left( \Omega\left(\sqrt{d}\log n \right)\right)$.
\end{theorem}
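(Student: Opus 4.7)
The plan is the three-step measure-based template: introduce a complexity measure $\mathsf{SED}$ (\emph{shifted evaluation dimension}), verify that it is invariant under functional equivalence over $\{0,1\}^n$, upper bound it on $\Sigma\Pi\Sigma\Pi$ circuits of formal degree $d$ and individual degree $O(1)$, and exhibit a Nisan--Wigderson-style polynomial $P_d\in\VNP$ on which it is provably much larger. Since, as noted in \autoref{sec: functional lb and partial derivatives}, partial derivatives are not preserved under functional equivalence, the conceptual move is to replace differentiation by evaluation at $\{0,1\}$-valued partial assignments (which \emph{is} preserved), and then recover a projected-shifted-partial-derivative style bound of the sort used in~\cite{KLSS, KS14} for this variant.

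Concretely, for parameters $k$ and $\ell$ define
\[
\mathsf{SED}_{k,\ell}(P) \;:=\; \max_{\substack{T\subseteq[n]\\|T|=k}}\;\dim_\F\,\mathrm{span}\left\{\,m\cdot P|_{\vec{x}_T=\alpha}\;(\bmod\;I_0)\,:\,\alpha\in\{0,1\}^T,\;m\in\mathrm{Mon}_{\le\ell}(\vec{x}_{\bar T})\right\},
\]
where $I_0=\langle x_i^2-x_i:i\in\bar T\rangle$. Invariance under functional equivalence is immediate: if $C\equiv P$ on $\{0,1\}^n$ then $C|_{\vec{x}_T=\alpha}\equiv P|_{\vec{x}_T=\alpha}$ as functions on $\{0,1\}^{\bar T}$ for every $\alpha$, and a polynomial reduced modulo $I_0$ is determined by its $\{0,1\}$-values. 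For the upper bound, I would treat a single product $\Pi_i=\prod_j Q_{ij}$ in $C=\sum_{i=1}^s\Pi_i$. Following the template of~\cite{KLSS, KS14}, a random partition/restriction of the variables ensures that, in expectation, only $O(\sqrt d)$ factors $Q_{ij}$ are ``meaningfully'' touched by the $0/1$-restriction $\vec{x}_T=\alpha$. The bounded individual-degree hypothesis is what makes this work: it caps how many factors any single $T$-variable can participate in, thereby bounding how entangled the product is under restriction. The span of evaluations of $\Pi_i$ shifted by $\mathrm{Mon}_{\le\ell}(\vec{x}_{\bar T})$ modulo $I_0$ then lies in a space of dimension at most $\exp(O(\sqrt d))\cdot\binom{n}{\ell+O(\sqrt d)}$, giving $\mathsf{SED}_{k,\ell}(C)\le s\cdot\exp(O(\sqrt d))\cdot\binom{n}{\ell+O(\sqrt d)}$.

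For the lower bound, take $P_d$ to be a Nisan--Wigderson polynomial on $n=\poly(d)$ variables with parameters tuned as in~\cite{KS14}; membership in $\VNP$ is standard. The NW design property says that any two monomials share at most a constant number of variables, so after a random $T$-restriction of size $k=\Theta(\sqrt d)$, most monomials ``survive'' with essentially disjoint non-$T$ supports. A careful combinatorial analysis, adapted from~\cite{KS14}, then shows that the span of $\mathrm{Mon}_{\le\ell}(\vec{x}_{\bar T})$-shifts of these surviving evaluations has dimension $\binom{n}{\ell+\Omega(\sqrt d)}$ modulo $I_0$. Choosing $\ell$ at the ``trailing edge'' of the binomial, as in~\cite{KLSS, KS14}, turns the ratio of bounds into $s\ge\exp(\Omega(\sqrt d\log n))$. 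The main obstacle is the upper-bound step: unlike derivatives, evaluations have no product rule, so one cannot cleanly isolate ``touched'' factors from ``untouched'' ones, and the argument must proceed via an $I_0$-aware combinatorial encoding that genuinely uses the individual-degree cap. Removing this cap is exactly the regime in which \autoref{lem: acc to functional lb-intro} suggests progress would imply $\sp\ne\acc$.
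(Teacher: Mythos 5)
Your high-level blueprint — define a shifted, evaluation-based complexity measure, verify it is invariant under functional equivalence over $\{0,1\}^n$, upper bound it for low individual degree $\Sigma\Pi\Sigma\Pi$ circuits, and lower bound it for a Nisan--Wigderson polynomial — matches the paper's. But the step you correctly flag as ``the main obstacle'' is in fact the whole content of the argument, and your sketch neither supplies it nor points at the mechanism the paper uses.

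The paper's key technical lemma is a Taylor-expansion identity (\autoref{lemma: low ind degree taylor}): if $P(\vecy,\vecz)$ has individual degree at most $r$, then for every $\veca \in \{0,1\}^{n_y}$ of Hamming weight at most $k$,
\[
P(\veca, \vecz) \;\in\; \span\!\left(\left(\partial^{\leq rk} P\right)_{\vecy = \mathbf{0}}\right),
\]
because in the expansion $P_\vecz(\vecy+\veca) = \sum_\vece \veca^\vece \,\partial_{\vecy^\vece}(P_\vecz)$ every nonvanishing term has $\|\vece\|_0 \le k$ (support of $\veca$) and each coordinate of $\vece$ at most $r$ (individual degree), hence $\|\vece\|_1\le rk$. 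This is precisely the ``$I_0$-aware encoding'' you suspect exists: it converts shifted evaluation dimension into projected shifted partials of order $rk$, after which the KLSS/KS14 upper bound for depth-$4$ circuits applies verbatim (\autoref{lemma: eval dim vs partial derivatives}, \autoref{lem: depth 4 circuits sed-low bottom support}). Your heuristic that bounded individual degree ``caps how many factors a single $T$-variable can participate in'' is not the operative mechanism and is not even true in general; the only role of the individual degree bound is to cap the \emph{order} of the surviving Taylor terms.

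Two further mismatches are worth naming. First, your measure fixes a set $T$ of size exactly $k$ and ranges over all $2^k$ points of $\{0,1\}^T$; the paper instead fixes a \emph{large} block $\vecy$ (the first $k$ rows, $n_y = km$ variables for $\NW$) and ranges over the roughly $n_y^k$ points of weight at most $k$. That choice is what makes the lower bound for $\NW$ drop out: because $\NW$ is set-multilinear by rows, the set-multilinear weight-$k$ evaluations reproduce \emph{exactly} the $m^k$ degree-$k$ partial derivatives used in the PSPD lower bound (\autoref{lem: multilinear equivalence}, \autoref{cor: taylor mult}), whereas your $|T|=k$ measure provides at most $2^k$ restricted polynomials, far too few for the leading-monomial counting of \cite{KLSS,KS14} to yield a bound of the form $\binom{n_z}{\ell+d-k}\exp(-O(\log^2 d))$. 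Second, the paper does not prove the lower bound for plain $\NW$: to survive the random restriction that kills bottom product gates of support $\ge\Theta(\sqrt{d}/r)$, it works with $\NWL$, the composition of $\NW$ with sums of fresh variables, and shows (\autoref{lem: robustness under random restrictions}) that after the restriction one recovers $\NW$ as a further projection with probability $1-o(1)$; plain $\NW$ would not be robust to that step.
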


Our techniques for the proof of \autoref{thm: depth 4 lower bound} are again different from the proofs of homogeneous depth-$4$ lower bounds in the syntactic setting. We introduce a family of new complexity measures, which are functional in their definition (as opposed to partial derivative based measures), and use them to capture functional computation. The family of measures, called \emph{Shifted Evaluation dimension} is a shifted analogue of the well known notion of evaluation dimension, which has had many applications in algebraic complexity (for instance, in multilinear formula, circuit lower bounds~\cite{raz2004, Raz06, raz-yehudayoff}). We believe that the measure is of independent interest, and could have other potential applications. 

\paragraph{Elementary symmetric polynomials : }
In their paper~\cite{nw1997}, Nisan and Wigderson showed an exponential lower bound  on the size of homogeneous depth-$3$ circuits computing the elementary symmetric polynomials. A curious consequence of our proof, is that we are unable to show an analogue of \autoref{thm: depth 3 lower bound} for the elementary symmetric polynomials. One of the reasons for this is the fact that the elementary symmetric polynomials have a \emph{small} evaluation dimension complexity (the complexity measure used for this lower bound), hence our proof technique fails. However, it turns out the at least over fields of sufficiently large characteristic, there are polynomial sized depth-$3$ circuits of low formal degree which are functionally equivalent to the elementary symmetric polynomials  over $\{0,1\}^n$. The upper bounds are based on the simple observation that for any $d$ and $x \in \{0,1\}^n$, the value of $Sym_d(x)$ (elementary symmetric polynomial of degree $d$) is equal to $\binom{h(x)}{d}$, where $h(x) = \sum_i x_i$ is the hamming weight of $x$. In particular, for $d = 1$, the polynomial $\sum_i x_i$ is functionally equivalent to $Sym_1$, the polynomial $\frac{(\sum_i x_i)(\sum_i x_i - 1)}{2}$ is functionally equivalent to $Sym_2$ and so on. In particular, there is a polynomial which is a product of $d$ affine forms which is equivalent to $Sym_d$. 
However, over fields of low characteristic, the complexity of the elementary symmetric polynomials for functional computation by depth-$3$ (or even depth-$4$) circuits is not clear to us and is an interesting open question. 

\paragraph{Comparison to Kayal, Saha, Tavenas~\cite{KST15} : } In a recent independent result, Kayal, Saha and Tavenas showed exponential lower bounds for depth-$4$ circuits of bounded individual degree computing an explicit polynomial in $\VP$. Their proof uses a complexity measure called \emph{skew shifted partials} which is very similar in spirit to the notion of \emph{shifted evaluation dimension}, the complexity measure we use. Even though the results seem related, none of them subsumes the other.  For our proof, we require that the formal degree of the depth-$4$ circuit is small (homogeneity), in addition to the individual degree being small, whereas in~\cite{KST15} the authors only require the individual degree of the circuit to be small. In this sense, their result is for a more general model than ours. However, for our lower bounds, we only require the circuit to agree with the target hard polynomial over $\{0,1\}^n$ while the proof in~\cite{KST15} is for syntactically computing the hard polynomial. 
Hence, the results are incomparable. 


\subsection{Organization of the paper}
We set up some notations to be used in the rest of the paper in \autoref{sec:notation}. We prove the connections between functional lower bounds for depth-$4$ circuits and  lower bounds for $\acc$ in \autoref{sec: functional lb and acc}. We introduce our main complexity measure in \autoref{sec:complexity measure}. We define and study the properties of the hard polynomials for our lower bounds in \autoref{sec:NW}. We present the proof of \autoref{thm: depth 3 lower bound} in \autoref{sec:depth 3} and the proof of \autoref{thm: depth 4 lower bound} in \autoref{sec:depth 4}.

\section{Notation}\label{sec:notation}
We now setup some notation to be used for the rest of the paper. 
\begin{itemize}

\item Throughout the paper, we shall use bold-face letters such as $\vecx$ to denote a set $\set{x_1,\dots, x_n}$. 
Most of the times, the size of this set would be clear from context. 
We shall also abuse this notation to use $\vecx^\vece$ to refer to the monomial $x_1^{e_1}\cdots x_n^{e_n}$. 

\item The set of formal variables in this paper denoted by $\vecx$ of size $n$ shall often be partitioned into sets $\vecy$ and $\vecz$. We shall use $n_y$ and $n_z$ to denote the sizes of $\vecy$ and $\vecz$ respectively. 

\item For an integer $m > 0$, we shall use $[m]$ to denote the set $\set{1,\dots, m}$. 

\item We shall use the short-hand $\partial_{\vecx^{\vece}}(P)$ to denote
\[
\frac{\partial^{e_1}}{\partial x_1^{e_1}}\inparen{ \frac{\partial^{e_2}}{\partial x_2^{e_2}}\inparen{\cdots \inparen{ P }\cdots}}.
\]

\item For a set of polynomials $\mathcal{P}$ shall use $\partial_{\vecy}^{=k}\mathcal{P}$ to denote the set of all $k$-th order partial derivatives of polynomials in $\mathcal{P}$ with respect to $y$ variables only, and $\partial_{\vecy}^{\leq k}\mathcal{P}$ similarly. 

Also, $\vecx^{=\ell} \mathcal{P}$ shall refer to the set of polynomials of the form $\vecx^{\vece} \cdot P$ where $\deg(\vecx^{\vece}) = \ell$ and $P \in \mathcal{P}$. Similarly $\vecx^{\leq \ell} \mathcal{P}$.  

\item For a polynomial $P \in \F[\vecx]$ and for a set $S \subseteq\F^n$, we shall denote by $\eval_S(P)$ the vector of the evaluation of $P$ on points in $S$ (in some natural predefined order like say the lexicographic order). 
For a set of vectors $V$, their span over $\F$ will be denoted by $\span(V)$ and their dimension by $\dim(V)$. 
\end{itemize}

\section{Functional lower bounds for depth-$4$ circuits and  $\acc$}\label{sec: functional lb and acc}
In this section, we show that strong enough functional lower bounds for even very special depth-$4$ arithmetic circuits are sufficient to imply new lower bounds for $\acc$. The proof follows from a simple application of a well known characterization of $\acc$ by Yao~\cite{yao85} and  Beigel and Tarui~\cite{beigeltarui94}. The following version of the theorem is from Arora-Barak~\cite{arorabarak}

\begin{theorem}[\cite{yao85, beigeltarui94}]\label{thm: acc to sym}
If a function $f:\{0,1\}^n \rightarrow \{0,1\}$ is in $\acc$, then $f$ can be computed by a depth $2$ circuit with a symmetric gate with quasipolynomial $\left(\exp(\log^{O(1)} n)\right)$ fan-in  at the output level and $\vee$ gates with polylogarithmic $\left(\log^{O(1)} n\right)$  fan-in at the bottom level. 
\end{theorem}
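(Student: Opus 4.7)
The plan is to arithmetize the given $\acc$ circuit gate by gate into a low-degree integer polynomial, and then repackage that polynomial as a single symmetric gate applied to many polylogarithmic-fan-in conjunctions. Fix an $\acc$ circuit $C$ of polynomial size and constant depth $d$, with $\wedge$, $\vee$, $\neg$, and $\mathsf{MOD}_m$ gates for some composite $m = p_1^{e_1}\cdots p_k^{e_k}$, computing a function $f : \{0,1\}^n \to \{0,1\}$.

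First I would build a polynomial $P(\vecx)$ with integer coefficients, of degree $\log^{O(1)} n$, that agrees with $f$ on the Boolean cube. The gate arithmetizations are standard: $\neg x = 1 - x$, $\wedge(x_1,\ldots,x_t) = \prod_i x_i$, and $\vee(x_1,\ldots,x_t) = 1 - \prod_i (1 - x_i)$. For each prime power $p_i^{e_i}$ one can test $y_1 + \cdots + y_t \equiv 0 \pmod{p_i^{e_i}}$ by a polynomial of bounded degree (using Fermat's little theorem, with Hensel-style lifting when $e_i \geq 2$), and the Chinese Remainder Theorem glues these into a representation of $\mathsf{MOD}_m$. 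A naive composition up the $d = O(1)$ layers blows up both degree and coefficients, so at each level I would use probabilistic polynomials of polylogarithmic degree, in the Razborov--Smolensky spirit, that are correct with high probability on Boolean inputs, and then boost correctness to hold everywhere on $\{0,1\}^n$ by taking ORs of $\mathrm{poly}(n)$ independent trials. After $d$ levels the final $P$ has degree $\log^{O(1)} n$ and coefficients of magnitude $\exp(\log^{O(1)} n)$.

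Next I would recast $P$ in the required $\mathrm{SYM} \circ \vee$ form. Writing $P(\vecx) = \sum_M c_M\, M(\vecx)$, each $M$ is an AND of at most $\log^{O(1)} n$ variables, and there are at most $n^{\log^{O(1)} n} = \exp(\log^{O(1)} n)$ distinct such monomials. To turn this integer sum into a symmetric function of Boolean inputs I would replicate each $M$ exactly $|c_M|$ times, absorbing signs by working modulo an integer $N$ of magnitude $\exp(\log^{O(1)} n)$ so that $-c \equiv N - c \pmod{N}$ and all replicated coefficients become non-negative. The symmetric gate at the top then only has to decide $f(\vecx) \in \{0,1\}$ as a function of the sum modulo $N$ of its $\exp(\log^{O(1)} n)$ Boolean inputs, which it is by definition allowed to do. Finally each AND at the bottom is converted to an OR of polylogarithmic fan-in via De Morgan's law, with the resulting input negations and sign flips absorbed into a redefined top symmetric function.

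The main obstacle is the first step: controlling the coefficient and degree blow-up through the composition of the $d$ arithmetized layers, and in particular the arithmetization of $\mathsf{MOD}_{p^e}$ gates for higher prime powers $e \geq 2$. A gate-wise composition over the integers yields doubly-exponential coefficients, which would push the final symmetric-gate fan-in far past quasipolynomial; the Beigel--Tarui resolution is to perform all intermediate arithmetic inside a carefully chosen quotient ring that preserves the relevant residue class of $f$, in combination with the probabilistic polynomial machinery above to keep the per-level degree at $\log^{O(1)} n$. Once the low-degree integer polynomial has been constructed, the repackaging into a $\mathrm{SYM} \circ \vee$ circuit is essentially bookkeeping.
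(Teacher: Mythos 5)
First, note that the paper does not prove this theorem: it is invoked as a black box, cited to Yao and Beigel--Tarui and taken in the form stated in Arora--Barak. So there is no internal proof to compare against; what follows is an assessment of your sketch of the known argument.

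Your outline has the correct overall shape of the Beigel--Tarui proof (arithmetize gate by gate, CRT for composite $\mathsf{MOD}_m$, Razborov--Smolensky--style low-degree polynomials, collect monomials as $\mathrm{SYM}\circ\mathrm{AND}$, then convert to $\mathrm{SYM}\circ\vee$ by De Morgan, absorbing negations into the symmetric top gate). But there is a genuine gap in your derandomization step. You propose to ``boost correctness to hold everywhere on $\{0,1\}^n$ by taking ORs of $\mathrm{poly}(n)$ independent trials.'' This fails quantitatively and qualitatively. Quantitatively: to union-bound over all $2^n$ inputs you need per-input error below $2^{-n}$, which forces $\Omega(n)$ trials and hence degree $\Omega(n)$, not $\log^{O(1)} n$. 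Qualitatively: once you compose probabilistic polynomials up several layers, the error is two-sided, so ORing trials does not reduce it (ORing only helps for the one-sided Razborov--Smolensky approximation of a single $\vee$ gate).

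The actual resolution is different and is exactly where the strength of the symmetric top gate is used. One observes that the probabilistic polynomial for the whole circuit is correct with probability, say, $2/3$ on each fixed input and needs only $\log^{O(1)} n$ random bits (sharing randomness across gates and using a union bound over the polynomially many gates for each fixed input). One then \emph{fixes every seed}, obtaining $\exp(\log^{O(1)} n)$ deterministic polynomials of degree $\log^{O(1)} n$ with integer coefficients of magnitude $\exp(\log^{O(1)} n)$, and lets the symmetric gate tally the replicated monomials across all seeds and output $1$ exactly when a majority of seeds vote $1$. The majority/threshold is absorbed into the (arbitrarily powerful) symmetric gate at the top; it is not achieved by $\vee$ gates lower down. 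An alternative route in Beigel--Tarui sidesteps probabilistic polynomials altogether by replacing each $\mathsf{MOD}$ gate with Toda's modulus-amplifying polynomials, which compute the gate \emph{exactly} at polylogarithmic degree over a growing modulus, removing the need for any error boosting. Either mechanism gives the theorem; ``OR of $\mathrm{poly}(n)$ trials'' does not. The rest of your sketch (coefficient replication modulo a suitable $N$, converting bottom ANDs to ORs and absorbing the resulting negations into the top symmetric function) is correct bookkeeping.
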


We now prove the following lemma which shows {\it functional} upper bound for $\acc$. 

\begin{lemma}\label{lem: acc to functional ub}
Let $\F$ be any field of characteristic zero or at least $\exp\left(\omega\left(\poly(\log n)\right)\right)$. If a function $f:\{0,1\}^n \rightarrow \{0,1\}$ is in $\acc$, then there exists a polynomial $P_f \in \F[x_1, x_2, \ldots, x_n]$ such that the following are true: 
\begin{itemize}
\item For every $\vecx \in \{0,1\}^n$, $f(\vecx) = P_f(\vecx)$. 
\item $P_f$ can be computed by a quasipolynomial sized $\Sigma\!\wedge\!\Sigma\Pi$ circuit with bottom fan-in at most $\poly(\log n)$, which are depth-$4$ circuits where the product gates in the second level are powering gates. 
\end{itemize}
\end{lemma}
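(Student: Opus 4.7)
The plan is to apply the Yao--Beigel--Tarui characterization (\autoref{thm: acc to sym}) and then arithmetize the resulting two-level circuit in a way that preserves the depth-$4$ structure $\Sigma\!\wedge\!\Sigma\Pi$ with bottom fan-in $\poly(\log n)$. Let $f \in \acc$ and fix a circuit of the shape given by \autoref{thm: acc to sym}: a symmetric gate $\sigma$ of fan-in $m = \exp(\poly(\log n))$ at the top, fed by $\vee$-gates $g_1,\dots,g_m$ each of fan-in $k = \poly(\log n)$, whose inputs are literals $x_i$ or $1-x_i$.

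First I would replace each $\vee$-gate by a polynomial. For a gate $g_i$ with literal inputs $\ell_{i,1},\dots,\ell_{i,k}$, define
\[
b_i(\vecx) \;=\; 1 - \prod_{j=1}^{k}\bigl(1 - \ell_{i,j}(\vecx)\bigr).
\]
Since the literals take $\{0,1\}$ values on the hypercube, $b_i(\vecx) = g_i(\vecx) \in \{0,1\}$ for every $\vecx \in \{0,1\}^n$. Expanded, each $b_i$ is a $\Sigma\Pi$ polynomial with at most $2^k = 2^{\poly(\log n)}$ product terms and bottom fan-in at most $k = \poly(\log n)$, which is exactly the bottom two layers we want.

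Next I would arithmetize the symmetric gate. A symmetric function $\sigma$ on $m$ boolean inputs $b_1,\ldots,b_m$ depends only on their sum $S(\vecx) := \sum_{i=1}^m b_i(\vecx)$, which is an integer in $\{0,1,\dots,m\}$. By Lagrange interpolation there is a univariate polynomial $q(t) = \sum_{j=0}^{m} c_j t^j \in \F[t]$ of degree at most $m$ such that $q(s) = \sigma(s)$ for every $s \in \{0,1,\dots,m\}$; this uses exactly the hypothesis on the characteristic, since we require $0,1,\dots,m$ to remain distinct in $\F$, and $m+1 \leq \exp(\poly(\log n))$. Setting
\[
P_f(\vecx) \;=\; q\bigl(S(\vecx)\bigr) \;=\; \sum_{j=0}^{m} c_j \left(\sum_{i=1}^m b_i(\vecx)\right)^{\!j}
\]
yields $P_f(\vecx) = f(\vecx)$ for every $\vecx \in \{0,1\}^n$. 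The outermost expression is a sum of powering gates (the $\wedge$ layer), each applied to the $\Sigma\Pi$-polynomial $\sum_i b_i(\vecx)$ of bottom fan-in $\poly(\log n)$ — exactly the desired $\Sigma\!\wedge\!\Sigma\Pi$ shape.

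Finally, I would check the size bound. The top $\Sigma$-gate has fan-in $m+1 = \exp(\poly(\log n))$; each $\wedge$ gate has fan-in at most $m$; the inner $\Sigma$ has fan-in $m$; and each bottom $\Pi$ has fan-in at most $k = \poly(\log n)$. All of these are quasipolynomial in $n$, giving a circuit of size $\exp(\poly(\log n))$. The only genuinely delicate point is the Lagrange interpolation step, which is where the hypothesis on the characteristic of $\F$ is unavoidable — for small positive characteristics (e.g.\ fixed primes) the values $0,1,\dots,m$ collide and the symmetric gate cannot be faithfully represented by a univariate polynomial of degree $m$.
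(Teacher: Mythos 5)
Your proof follows essentially the same route as the paper's: apply the Yao--Beigel--Tarui normal form, arithmetize the bottom $\vee$-gates into $\Sigma\Pi$ polynomials of bottom fan-in $\poly(\log n)$ taking $\{0,1\}$ values on the cube, and then replace the symmetric top gate by a univariate interpolating polynomial in the sum $\sum_i b_i$, which yields the $\Sigma\!\wedge\!\Sigma\Pi$ form. Your writeup is in fact slightly more explicit than the paper's (the paper simply asserts the existence of the multilinear $g_i$'s and the interpolating polynomial, whereas you construct the $b_i$ and invoke Lagrange interpolation directly), but the argument is the same.
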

\begin{proof}
From \autoref{thm: acc to sym}, we know that there exists a symmetric function $h$ and multilinear polynomials $g_1, g_2, \ldots, g_t$  such that 
\begin{itemize}
\item $t = \exp(\poly(\log n))$. 
\item For every $\vecx \in \{0,1\}^n$, $f(\vecx) = h(g_1(\vecx), g_2(\vecx), \ldots, g_t(\vecx))$. 
\item Each $g_i$ is a multilinear polynomial in at most $\poly(\log n)$ variables. 
\item For every $\vecx \in \{0,1\}^n$ and $j \in [t]$, $g_j(\vecx) \in \{0,1\}$. 
\end{itemize}
From the last item above, we know that the $g_i$s only take boolean values on inputs from $\{0,1\}^n$. Since $h$ is symmetric, it follows that its value on boolean inputs only depends upon the hamming weight of its input. Hence, $h$ is in fact a function of $\sum_{i \in [t]} g_i$. Therefore, over any field of characteristic zero or larger than $t$, there exists a univariate polynomial $P_h$ of degree at most $t$ over reals, such that 
\[
\forall \vecx \in \{0,1\}^n, h\left(g_1(\vecx), g_2(\vecx), \ldots, g_t(\vecx)\right) = P_h\left(\sum_{i \in [t]} g_i(\vecx)\right)
\] 
The lemma now follows from the fact that each $g_i$ is a multilinear polynomial in $\poly(\log n)$ variables. 
\end{proof}

\autoref{lem: acc to functional ub} now immediately implies the following lemma. 
\begin{lemma}~\label{lem: acc to functional lb}
Let $\F$ be any field of characteristic zero or at least $\exp\left(\omega\left(\poly(\log n)\right)\right)$. Then, an \\
$\exp\left(\omega\left(\poly(\log n)\right)\right)$ functional lower bound for a function on $n$ variables for $\Sigma\wedge\Sigma\Pi^{[\poly(\log n)]}$ circuits over $\F$ would imply that $f$ is not in  $\acc$. 
\end{lemma}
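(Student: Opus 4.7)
The plan is simply to observe that this lemma is the contrapositive of Lemma~\ref{lem: acc to functional ub}, which has already been proved. First I would assume toward contradiction that $f \in \acc$ while simultaneously assuming that any $\Sigma\wedge\Sigma\Pi^{[\poly(\log n)]}$ circuit over $\F$ that functionally computes $f$ on $\{0,1\}^n$ has size at least $\exp(\omega(\poly(\log n)))$.

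Given $f \in \acc$, I would directly invoke Lemma~\ref{lem: acc to functional ub}, whose hypotheses on the characteristic of $\F$ match those here. This yields a polynomial $P_f \in \F[x_1, \ldots, x_n]$ such that $P_f(\vecx) = f(\vecx)$ for every $\vecx \in \{0,1\}^n$, computed by a $\Sigma\wedge\Sigma\Pi$ circuit of quasipolynomial (that is, $\exp(\poly(\log n))$) size whose bottom fan-in is bounded by $\poly(\log n)$. In other words, this is a $\Sigma\wedge\Sigma\Pi^{[\poly(\log n)]}$ circuit of size $\exp(\poly(\log n))$ that functionally computes $f$ on $\{0,1\}^n$.

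But this contradicts our hypothesis of an $\exp(\omega(\poly(\log n)))$ functional lower bound for $f$ in this circuit class, since $\exp(\poly(\log n))$ is strictly smaller than $\exp(\omega(\poly(\log n)))$. Hence $f \notin \acc$, as desired. There is no real obstacle in this argument: the entire content of the lemma lies in the previous reduction, and this is just a one-line restatement in contrapositive form. The only detail to be careful about is matching quantifiers — the quasipolynomial upper bound from Lemma~\ref{lem: acc to functional ub} is of the form $\exp(\poly(\log n))$ while the assumed lower bound has the form $\exp(\omega(\poly(\log n)))$, so one should note that the latter strictly dominates the former and hence gives an honest contradiction.
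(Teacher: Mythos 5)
Your proof is correct and matches the paper exactly: the paper states that \autoref{lem: acc to functional lb} ``immediately'' follows from \autoref{lem: acc to functional ub}, and you have simply spelled out that immediate contrapositive, including the right observation that $\exp(\poly(\log n))$ is dominated by $\exp(\omega(\poly(\log n)))$.
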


\section{The complexity measure}~\label{sec:complexity measure}
In the lower bounds for homogeneous depth four circuits \cite{KLSS,KS14}, the complexity measure used was the \emph{dimension of projected shifted partial derivatives}. The following definition is not the same as used in \cite{KLSS,KS14}, but this slight variant would be easier to work with for our applications. We abuse notation to call it ``projected shifted partial derivatives'' as it continues to have the essence of the original definition. A discussion on the precise differences between the following definition and the original definition of \cite{KLSS,KS14} is present in \autoref{sec:pspd-discussion}

\begin{definition}[Projected shifted partial derivatives]\label{defn:pspd}
Let $\vecx = \vecy \sqcup \vecz$ with $|\vecy| = n_y$ and $|\vecz| = n_z$, and let $S$ be the set of all strings in $\set{0,1}^{n_y + n_z}$ that are zero on the first $n_y$ coordinates. If $k, \ell$ are some parameters, the \emph{dimension of projected shifted partial derivatives} for any polynomial $P(\vecy, \vecz) \in \F[\vecy, \vecz]$, denoted by $\Gamma_{k,\ell}^{\mathrm{PSPD}}(P)$, is defined as
\[
\Gamma_{k,\ell}^{\mathrm{PSPD}}(P) \spaced{:=} \dim\inbrace{\eval_{S}\inparen{\vecz^{=\ell} \partial_{\vecy}^{=k}(P)}}.\qedhere
\]
\end{definition}

The above measure is still syntactic as partial derivatives are not useful in the functional setting. For the functional setting, we shall use a different measure for our lower bound that we call \emph{the shifted evaluation dimension}.  We now define the complexity measure that we shall be using to prove the lower bound. For brevity, we shall assume that our set of variables $\vecx$ is partitioned into $\vecy$ and $\vecz$.  For our proofs, we shall use a carefully chosen partition. We now formally define the notion of \emph{shifted evaluation dimension} of a polynomial below. 


\begin{definition}[Shifted evaluation dimension]
Let $\ell$ and $k$ be some parameters and let $\vecx = \vecy \sqcup \vecz$ such that $|\vecy| = n_y$ and $|\vecz| = n_z$.  For any polynomial $P \in \F[\vecy, \vecz]$, define $\Gamma_{k, \ell}(P)$ as
\[
\Gamma_{k,\ell}^{\mathrm{SED}}(P) \spaced{:=} \dim\inbrace{\eval_{\{0,1\}^{n_z}}\inparen{\vecz^{=\ell} \cdot \{P(\veca,\vecz) : \veca \in \{0,1\}^{n_y}_{\leq k} \}}}.\qedhere
\]
\end{definition}

Informally, for every polynomial $P$, we fix a partition of the input variables into $\vecy$ and $\vecz$ and generate a linear space by the following algorithm. 
\begin{itemize}
\itemsep 0pt
\item We take the projections of $P$ obtained by setting each of the $y$ variables to $0,1$ such that the number of $y$ variables set to $1$ is at most $k$.
\item  We shift the polynomials obtained in step $1$ by all monomials in variables $\vecz$ of degree  $\ell$.
\item Observe that the polynomials obtained at the end of step two are polynomials  only in the $\vecz$ variables. We now look at the evaluation vectors of these polynomials over $\{0,1\}^{n_z}$. 
\end{itemize}
The complexity measure of the polynomial $P$ is defined as the dimension of the linear space generated by the vectors obtained at the end of step $3$ in the algorithm above. For our proof, we will pick a careful partition of the variables $\vecx$ into $\vecy$ and $\vecz$ and look at $\Gamma_{k, \ell}^{\mathrm{SED}}(P)$. The following lemma highlights the key reason of utility of the above measure to functional lower bounds. 

\begin{lemma}[Functional equivalence and shifted evaluation dimension]\label{lem: complexity measure utility}
Let $P \in \F[\vecx]$ and $Q \in \F[\vecx]$ be any two polynomials which are functionally equivalent over $\{0,1\}^n$. Then, for every choice of $k$, $\ell$ and partition $\vecx = \vecy \sqcup \vecz$
\[
\Gamma_{k, \ell}^{\mathrm{SED}}(P) \spaced{=} \Gamma_{k, \ell}^{\mathrm{SED}}(Q)
\]
\end{lemma}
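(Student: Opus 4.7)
The plan is to observe that $\Gamma_{k,\ell}^{\mathrm{SED}}(P)$ is, by design, computed entirely from values of $P$ at boolean points. More precisely, each element of the generating set $\vecz^{=\ell} \cdot \{P(\veca, \vecz) : \veca \in \{0,1\}^{n_y}_{\leq k}\}$ is a polynomial of the form $\vecz^{\vece} \cdot P(\veca, \vecz)$ with $\veca \in \{0,1\}^{n_y}_{\leq k}$ and $|\vece| = \ell$. Its evaluation at a boolean point $\vecb \in \{0,1\}^{n_z}$ is
\[
\left.\vecz^{\vece}\cdot P(\veca, \vecz)\right|_{\vecz = \vecb} \;=\; \vecb^{\vece}\cdot P(\veca, \vecb),
\]
so the coordinate only depends on the scalar $P(\veca, \vecb)$ at the cube point $(\veca, \vecb) \in \{0,1\}^n$, together with the purely combinatorial factor $\vecb^{\vece}$.

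Next I would use the hypothesis of functional equivalence: since $P$ and $Q$ agree on $\{0,1\}^n$, we have $P(\veca, \vecb) = Q(\veca, \vecb)$ for every $\veca \in \{0,1\}^{n_y}$ and every $\vecb \in \{0,1\}^{n_z}$. Plugging this into the displayed identity shows that for each fixed choice of $\veca \in \{0,1\}^{n_y}_{\leq k}$ and $\vece$ with $|\vece|=\ell$,
\[
\eval_{\{0,1\}^{n_z}}\bigl(\vecz^{\vece}\cdot P(\veca, \vecz)\bigr) \;=\; \eval_{\{0,1\}^{n_z}}\bigl(\vecz^{\vece}\cdot Q(\veca, \vecz)\bigr)
\]
as vectors in $\F^{2^{n_z}}$. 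Consequently the two generating sets
\[
\eval_{\{0,1\}^{n_z}}\bigl(\vecz^{=\ell}\cdot\{P(\veca,\vecz):\veca\in\{0,1\}^{n_y}_{\leq k}\}\bigr) \quad\text{and}\quad \eval_{\{0,1\}^{n_z}}\bigl(\vecz^{=\ell}\cdot\{Q(\veca,\vecz):\veca\in\{0,1\}^{n_y}_{\leq k}\}\bigr)
\]
are literally the same (unordered) set of vectors, so their $\F$-spans coincide and in particular have the same dimension. This yields $\Gamma_{k,\ell}^{\mathrm{SED}}(P)=\Gamma_{k,\ell}^{\mathrm{SED}}(Q)$.

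Because the key identity $\vecz^{\vece}\cdot P(\veca,\vecz)\big|_{\vecz=\vecb}=\vecb^{\vece}\cdot P(\veca,\vecb)$ reduces everything to cube evaluations of $P$, there is no genuine obstacle; the content of the lemma is really the observation that the measure was \emph{designed} to depend only on the values of $P$ on $\{0,1\}^n$, in sharp contrast to partial-derivative based measures (as highlighted by the counterexample in \autoref{sec: functional lb and partial derivatives}). The only mild subtlety worth emphasising in the write-up is that one must restrict to $\veca \in \{0,1\}^{n_y}$ (rather than substituting a generic field element) and to $\vecb \in \{0,1\}^{n_z}$ when reading off evaluations, so that functional equivalence can be invoked; both of these restrictions are already built into the definition of $\Gamma_{k,\ell}^{\mathrm{SED}}$.
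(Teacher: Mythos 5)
Your proof is correct and is exactly the paper's argument, just spelled out more explicitly: the measure is the dimension of a span of vectors that are evaluations of $P$ over subcubes of $\{0,1\}^n$, hence it depends only on the restriction of $P$ to $\{0,1\}^n$. The paper states this in one line; your unpacking of the identity $\vecz^{\vece}\cdot P(\veca,\vecz)\big|_{\vecz=\vecb}=\vecb^{\vece}\cdot P(\veca,\vecb)$ is the same reasoning made precise.
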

\begin{proof}
The proof easily follows from the fact that the measure $\Gamma_{k, \ell}^{\mathrm{SED}}(P)$ is the dimension of a linear space which is generated by vectors which correspond to evaluations of $P$ over subcubes of $\{0,1\}^n$. Hence, it would be the same for any two polynomials which agree as functions over $\{0,1\}^n$.  
\end{proof}

\begin{remark*}
Observe that a lemma analogous to \autoref{lem: complexity measure utility} is not true in general for partial derivative based measures. And hence, the proofs for syntactic lower bounds which are based on such measures does not immediately carry over to the functional setting. 
\end{remark*}

\subsection{Evaluations  vs  partial derivatives}
In this section, we show that  for polynomials of low individual degree, the notion of shifted evaluation dimension can be used as a proxy for the notion of shifted partial derivatives. This is the key observation that drives the proofs of \autoref{thm: depth 3 lower bound} and \autoref{thm: depth 4 lower bound}. We first consider the case when the polynomial is \emph{set-multilinear} in which case derivatives can be directly related to careful evaluations. 

\subsubsection{For set-multilinear polynomials}

The explicit polynomials we shall be working with in this paper would be \emph{set-multilinear}. An example to keep in mind is $\Det_n$ or $\Perm_n$ where the variables can be partitioned into rows and each monomial involves exactly one variable from each part. 

\begin{definition}[Set-multilinear polynomials] A polynomial $P$ is said to be \emph{set-multilinear} with respect to the a partition $\vecx = \vecx_1 \sqcup \cdots \sqcup \vecx_r$ if every monomial of $P$ involves exactly\footnote{sometimes in the literature the word `exactly' is replaced by `at most' but in this paper we would be dealing with this definition.} one variable from each $\vecx_i$. 
\end{definition}

We begin with the following simple observation. 

\begin{observation}\label{lem: multilinear equivalence}
Let $P \in \F[\vecx]$ be a set-multilinear with respect to a partition $\vecx = \vecx_1 \sqcup \cdots \sqcup \vecx_r$. Let $\vecy = \vecx_1 \union \cdots \union \vecx_k$ for some $k \leq r$ and let $\vecz = \vecx \setminus \vecy$. Then, for any degree $k$ monomial $\vecy^\vece$ that is set-multilinear with respect to $\vecx_1 \sqcup \cdots \sqcup \vecx_k$, we have
\[
\frac{\partial P}{\partial \vecy^{\vece}} \spaced{=} P(\vece, \vecz).
\]  
\end{observation}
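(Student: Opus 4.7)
The plan is to reduce the identity to a monomial-by-monomial verification using set-multilinearity. First I would write $P = \sum_{M} c_M M$ as a sum over its set-multilinear monomials. By the hypothesis, each monomial $M$ contains exactly one variable from each part $\vecx_i$, so it factors uniquely as $M = M_{\vecy} \cdot M_{\vecz}$, where $M_{\vecy}$ is a set-multilinear monomial of degree $k$ in $\vecx_1 \sqcup \cdots \sqcup \vecx_k$ and $M_{\vecz}$ is a set-multilinear monomial of degree $r-k$ in $\vecx_{k+1} \sqcup \cdots \sqcup \vecx_r$. Since both the derivative and the evaluation are linear operators, it suffices to verify the identity
\[
\frac{\partial M}{\partial \vecy^{\vece}} \;=\; M(\vece, \vecz)
\]
for a single such monomial.

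Next I would carry out the two sides separately. For the left side, note that $M_{\vecy}$ is multilinear in $\vecy$ of degree exactly $k$, and $\vecy^{\vece}$ is also a multilinear monomial in $\vecy$ of degree exactly $k$ (picking one variable from each of $\vecx_1, \ldots, \vecx_k$). A straightforward computation shows that $\partial M_{\vecy}/\partial \vecy^{\vece}$ equals $1$ if $M_{\vecy} = \vecy^{\vece}$ and $0$ otherwise, so $\partial M/\partial \vecy^{\vece} = M_{\vecz}$ in the first case and $0$ in the second. For the right side, substituting $\vecy = \vece$ (a $0/1$ vector with a $1$ in exactly the positions of the variables in $\vecy^{\vece}$) into $M_{\vecy}$ yields $1$ iff $M_{\vecy}$ uses precisely the variables set to $1$ by $\vece$, i.e.\ iff $M_{\vecy} = \vecy^{\vece}$, and $0$ otherwise; hence $M(\vece,\vecz) = M_{\vecz}$ in the first case and $0$ in the second.

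Summing over all monomials of $P$ then gives the claimed equality. There is no real obstacle here: the argument is essentially a bookkeeping lemma, and the only place one needs to be careful is to observe that both the derivative with respect to a set-multilinear monomial of degree $k$ and the evaluation at the $0/1$ indicator vector of that monomial act as the same ``coefficient-extraction'' operator on set-multilinear monomials, returning the complementary $\vecz$-factor.
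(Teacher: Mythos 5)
Your proof is correct. It differs from the paper's in its decomposition: the paper argues by induction on the number of parts $k$, peeling off one variable set $\vecx_i$ at a time (writing $P = \sum_{y_i \in \vecx_1} y_i \cdot P_i(\vecx_2,\dots,\vecx_r)$ and identifying $\partial_{y_1}(P)$ with the evaluation $y_1\mapsto 1$, $y_j \mapsto 0$ for $j\neq 1$, then recursing), whereas you expand $P$ directly into its set-multilinear monomials $M = M_{\vecy}M_{\vecz}$ and verify that both $\partial/\partial\vecy^{\vece}$ and evaluation at $\vece$ act as the indicator $[M_{\vecy} = \vecy^{\vece}]$ times $M_{\vecz}$. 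Your route is arguably more transparent in that it makes the ``same coefficient-extraction operator'' nature of the two sides explicit in one step, avoiding the need to check that the intermediate polynomial $P(\vece,\vecx_2,\dots,\vecx_r)$ remains set-multilinear at each stage of the recursion; the paper's inductive version is slightly more compact notationally. Both are essentially elementary bookkeeping and neither has any gap.
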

\begin{proof}
We shall prove this by induction on $k$. Suppose $\vecy = \vecx_1$ and $y_1 \in \vecx_1$. Since $P$ is set-multilinear, we can write $P$ as
\[
P(\vecx_1,\cdots, \vecx_r) \spaced{=} \sum_{y_i \in \vecx_1} y_i \cdot P_i(\vecx_2,\cdots, \vecx_r).
\]
Hence it follows that $\partial_{y_1}(P)$ equals $P_1$, which is also the partial evaluation of $P$ where $y_1$ is set to $1$ and all other $y_i \in \vecx_1$ is set to zero. Hence, if $y_1 = \vecy^\vece$, then $\partial_{y_1}(P) = P(\vece, \vecx_2,\cdots, \vecx_r)$. 
The claim follows by repeating this argument on $P(\vece,\vecx_2,\cdots, \vecx_r)$ which continues to be set-multilinear.
\end{proof}

\autoref{lem: multilinear equivalence} immediately implies the following corollary, which shows that for set-multilinear polynomials {\emph shifted evaluation dimension} and {\emph shifted partial derivatives} are the same quantity if we choose our set of derivatives carefully. 

\begin{corollary}\label{cor: taylor mult}
Let $P(\vecx)$ be a set-multilinear polynomial with respect to $\vecx = \vecx_1 \sqcup \cdots \sqcup \vecx_r$. Suppose $\vecy = \vecx_1 \union \cdots \union \vecx_k$ and $\vecz = \vecx \setminus \vecy$. Then if we consider the dimension of projected shifted partials with respect to set-multilinear monomials in $\vecy$, we have
\[
\Gamma_{k, \ell}^{\mathrm{PSPD}}(P) \spaced{\leq} \Gamma_{k, \ell}^{\mathrm{SED}}(P).
\]  
\end{corollary}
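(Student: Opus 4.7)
The plan is to combine \autoref{lem: multilinear equivalence} with the simple observation that the evaluation set $S$ in the definition of $\Gamma^{\mathrm{PSPD}}_{k,\ell}$ zeros out the $\vecy$ variables, so that evaluating any pure $\vecz$-polynomial on $S$ is the same as evaluating it on $\{0,1\}^{n_z}$. The inequality $\Gamma_{k,\ell}^{\mathrm{PSPD}}(P) \le \Gamma_{k,\ell}^{\mathrm{SED}}(P)$ will then fall out by exhibiting an explicit inclusion of the generating sets of these two linear spaces.

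First, I would fix any set-multilinear monomial $\vecy^{\vece}$ of degree $k$ in $\vecy = \vecx_1 \sqcup \cdots \sqcup \vecx_k$, together with its characteristic vector $\vece \in \{0,1\}^{n_y}$, which has Hamming weight exactly $k$. By \autoref{lem: multilinear equivalence}, the partial derivative $\partial_{\vecy^{\vece}}(P)$ is identically equal to the partial evaluation $P(\vece, \vecz)$, which is a polynomial in the $\vecz$-variables alone. Consequently, for any shift monomial $m$ of degree exactly $\ell$ in $\vecz$, the product $m \cdot \partial_{\vecy^{\vece}}(P)$ is also a pure $\vecz$-polynomial; and since every point of $S$ has the form $(\mathbf{0}, \boldsymbol{b})$ with $\boldsymbol{b} \in \{0,1\}^{n_z}$, the evaluation on $S$ is unchanged by discarding the $\vecy$-coordinates:
\[
\eval_S\inparen{m \cdot \partial_{\vecy^{\vece}}(P)} \spaced{=} \eval_{\{0,1\}^{n_z}}\inparen{m \cdot P(\vece, \vecz)}.
\]

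Next, since $\vece$ has Hamming weight exactly $k$, it certainly lies in $\{0,1\}^{n_y}_{\leq k}$, so $P(\vece, \vecz)$ is one of the partial evaluations used to generate the $\Gamma^{\mathrm{SED}}_{k,\ell}(P)$ space. Ranging $\vecy^{\vece}$ over all set-multilinear monomials of degree $k$ and $m$ over all degree-$\ell$ monomials in $\vecz$, we conclude that every generator of the $\Gamma_{k,\ell}^{\mathrm{PSPD}}(P)$ space (restricted, as the corollary stipulates, to set-multilinear derivatives) already appears among the generators of $\Gamma_{k,\ell}^{\mathrm{SED}}(P)$. Taking spans and then dimensions yields the claimed inequality.

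There is no serious obstacle here: the corollary is essentially a bookkeeping consequence of \autoref{lem: multilinear equivalence}. The only subtlety worth flagging explicitly in the formal write-up is that the restriction to \emph{set-multilinear} $\vecy^{\vece}$ is essential, since \autoref{lem: multilinear equivalence} identifies a $k$-th order partial with a boolean partial evaluation only in that regime; differentiating twice along variables within the same part $\vecx_j$ would not correspond to any $0/1$ substitution, and the inclusion-of-generating-sets argument would no longer go through as stated.
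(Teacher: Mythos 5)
Your proof is correct and is exactly the argument the paper has in mind: the paper simply asserts that the corollary "immediately follows" from \autoref{lem: multilinear equivalence}, and your write-up supplies the routine details (each set-multilinear $\partial_{\vecy^{\vece}}(P)$ equals $P(\vece,\vecz)$ with $\vece\in\{0,1\}^{n_y}_{\le k}$; shifting and evaluating on $S = \{0\}^{n_y}\times\{0,1\}^{n_z}$ is the same as evaluating the resulting pure $\vecz$-polynomial on $\{0,1\}^{n_z}$; hence the PSPD generating set is contained in the SED generating set). Your closing remark about the role of set-multilinearity is appropriate, though one could add that for a set-multilinear $P$ the non-set-multilinear order-$k$ $\vecy$-derivatives vanish anyway, so nothing is lost by the restriction.
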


\subsubsection{For low individual degree polynomials}

We now proceed to show that an \emph{approximation} of the \autoref{cor: taylor mult} also holds for polynomials of low individual degree. 
\begin{lemma}\label{lemma: low ind degree taylor}
Let $P(\vecy, \vecz)$ be a polynomial with individual degree at most $r$. Then, for every choice of parameters $k$ and $\ell$ 
\[
\set{ P(\veca, \vecz) : a \in \{0,1\}^{n_y}_{\leq k} } \spaced{\subseteq} \span\inparen{\inparen{\partial^{\leq rk} P}_{\vecy = \mathbf{0}}}.
\]  
\end{lemma}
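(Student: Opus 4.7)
The plan is to carry out a Taylor-style expansion of $P$ around $\vecy = \mathbf{0}$ and exploit the two hypotheses that each $y_i$ appears in $P$ to degree at most $r$, and that the evaluation point $\veca \in \{0,1\}^{n_y}$ has Hamming weight at most $k$. This will turn the partial evaluation $P(\veca, \vecz)$ into an explicit $\F$-linear combination of partial derivatives of $P$ of order at most $rk$, taken at $\vecy = \mathbf{0}$.

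First, I would expand
\[
P(\vecy, \vecz) \;=\; \sum_{\vece \,:\, 0 \le e_i \le r} c_{\vece}(\vecz)\cdot \vecy^{\vece},
\]
where each coefficient $c_{\vece}(\vecz) \in \F[\vecz]$ depends only on $\vecz$, and the sum ranges over integer tuples $\vece = (e_1,\ldots,e_{n_y})$ with $e_i \le r$ by the individual-degree hypothesis. Substituting $\vecy = \veca$, the monomial $\veca^{\vece} = \prod_i a_i^{e_i}$ equals $1$ exactly when $e_i = 0$ for every index $i$ with $a_i = 0$, i.e.\ when $\mathrm{supp}(\vece) \subseteq \mathrm{supp}(\veca)$, and is zero otherwise. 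Hence
\[
P(\veca, \vecz) \;=\; \sum_{\vece \,:\, \mathrm{supp}(\vece) \,\subseteq\, \mathrm{supp}(\veca),\; e_i \le r} c_{\vece}(\vecz),
\]
and every $\vece$ occurring in this sum has support of size at most $|\mathrm{supp}(\veca)| \le k$ with each entry bounded by $r$, so its total degree $|\vece|_1$ is at most $rk$.

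Next, I would identify each coefficient $c_{\vece}(\vecz)$ with a derivative of $P$: differentiating the original expansion term by term yields $(\partial_{\vecy^{\vece}} P)|_{\vecy = \mathbf{0}} = \vece! \cdot c_{\vece}(\vecz)$, where $\vece! := \prod_i e_i!$. Thus each $c_{\vece}(\vecz)$ is a scalar multiple of a partial derivative of $P$ of order $|\vece|_1 \le rk$ evaluated at $\vecy = \mathbf{0}$. Plugging this back into the previous display expresses $P(\veca, \vecz)$ as an explicit $\F$-linear combination of elements of the set $\{(\partial^{\vecf} P)|_{\vecy = \mathbf{0}} : |\vecf|_1 \le rk\}$, which is exactly the asserted membership in $\span((\partial^{\le rk} P)|_{\vecy = \mathbf{0}})$.

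The only delicate point is that the scalar $\vece!^{-1}$ must exist in $\F$. Since each coordinate satisfies $e_i \le r$, this is automatic when $\mathrm{char}(\F) = 0$ or $\mathrm{char}(\F) > r$; in arbitrary characteristic one may replace ordinary partial derivatives by Hasse derivatives $D^{\vece}$, which satisfy $(D^{\vece} P)|_{\vecy = \mathbf{0}} = c_{\vece}(\vecz)$ on the nose and do not affect the downstream shifted-evaluation-dimension arguments. I expect this minor bookkeeping around the characteristic to be the only step worth flagging; the rest is a direct coefficient extraction from the monomial expansion of $P$ at a sparse Boolean point.
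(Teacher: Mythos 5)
Your proof is correct and follows essentially the same route as the paper: expand $P$ around $\vecy = \mathbf{0}$ (the paper phrases this as a Taylor shift of $P_\vecz(\vecy + \veca)$, you phrase it as direct coefficient extraction from the $\vecy$-monomial expansion), then observe that $\|\veca\|_0 \le k$ kills every exponent vector with support larger than $k$ while individual degree $\le r$ caps each coordinate at $r$, so only derivatives of total order at most $rk$ survive. You are in fact more careful than the paper on one subtle point: the paper's displayed Taylor identity $P_\vecz(\vecy+\veca) = \sum_{\vece} \veca^{\vece}\,\partial_{\vecy^{\vece}}(P_\vecz)$ is missing the $1/\vece!$ factors if $\partial$ denotes the ordinary iterated derivative defined in the notation section, so it is only literally correct if one reads $\partial_{\vecy^{\vece}}$ as a Hasse derivative or works in characteristic $0$ or $> r$; your remark that this bookkeeping does not affect the downstream shifted-evaluation arguments is exactly right and worth keeping.
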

\begin{proof}
For the rest of this proof, we shall think of $P$ as an element $P_\vecz(\vecy) \in \F[\vecz][\vecy]$. Let $\veca$ be any point in $\{0,1\}^{n_y}$. Then by the Taylor's expansion, we know that 
\[
P_\vecz(\vecy + \veca) \spaced{=} \sum_{\vece} \veca^{\vece} \cdot \partial_{\vecy^{\vece}}(P_{\vecz})(\vecy)
\]
If the support of $\veca$ is at most $k$, then for every $\vece$ such that $\|\vece \|_0 > k$, we would have $\veca^{\vece} = 0$. Moreover, since $P$ is a polynomial of individual degree at most $r$, it follows that if any coordinate of $\vece$ is more than $r$ then
\[
\partial_{\vecy^{\vece}}(P_{\vecz})  = 0.
\]
In summary, for any $\veca$ such that $\|\veca\|_0 \leq k$, 
\begin{eqnarray*}
P_\vecz(\vecy + \veca) & = & \sum_{\substack{\vece : \|\vece\|_0 \leq k,\\ \|\vece\|_1 \leq rk}}\veca^{\vece} \cdot \partial_{\vecy^{\vece}}(P_{\vecz})(\vecy)\\
\implies P_\vecz(\veca) \quad = \quad  P(\veca, \vecz) & = & \sum_{\substack{\vece : \|\vece\|_0 \leq k,\\ \|\vece\|_1 \leq rk}}\veca^{\vece} \cdot \inparen{\partial_{\vecy^{\vece}}(P_{\vecz})}_{\vecy = \mathbf{0}} \quad \in \quad \span\inparen{\inparen{\partial^{\leq rk} P}_{\vecy = \mathbf{0}}}.\qedhere
\end{eqnarray*}
\end{proof}
We are now ready to prove our main technical claim of this section. 
\begin{lemma}\label{lemma: eval dim vs partial derivatives}
Let $P(\vecy, \vecz)$ be a polynomial with individual degree at most $r$. Then, for every choice of parameters $k$ and $\ell$, 
\[
\Gamma_{k, \ell}^{\mathrm{SED}}(P) \spaced{\leq} \Gamma_{rk, \ell}^{\mathrm{PSPD}}(P)
\]  
\end{lemma}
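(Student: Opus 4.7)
The plan is to upper bound the shifted evaluation dimension by applying the Taylor expansion established in Lemma~\ref{lemma: low ind degree taylor} to replace each ``evaluation'' generator $P(\veca,\vecz)$ by an explicit linear combination of partial derivatives at $\vecy = \mathbf{0}$, then transport the containment through shifting by $\vecz^{=\ell}$ and through the (linear) evaluation map $\eval_{\{0,1\}^{n_z}}$. Dimension is monotone under containment, so the inequality will drop out.

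In more detail, the first step is to apply Lemma~\ref{lemma: low ind degree taylor} to every $\veca \in \{0,1\}^{n_y}_{\leq k}$. This gives the explicit identity
\[
P(\veca,\vecz) \;=\; \sum_{\vece} \veca^{\vece}\cdot \bigl(\partial_{\vecy^{\vece}} P\bigr)_{\vecy=\mathbf{0}},
\]
where the sum is over $\vece$ with $\|\vece\|_0\leq k$ and $\|\vece\|_1\leq rk$. In particular, each $P(\veca,\vecz)$ lies in the $\F$-span of the partial derivatives $(\partial_{\vecy}^{\leq rk} P)_{\vecy=\mathbf{0}}$. Multiplying by any shift $\vecz^{\vecg}$ with $|\vecg|=\ell$ preserves this linear relation term by term, so the set $\vecz^{=\ell}\cdot\{P(\veca,\vecz):\veca\in\{0,1\}^{n_y}_{\leq k}\}$ is contained in the $\F$-span of $\vecz^{=\ell}\cdot(\partial_{\vecy}^{\leq rk} P)_{\vecy=\mathbf{0}}$.

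Next, I would observe that evaluation on the set $S$ of strings in $\{0,1\}^{n_y+n_z}$ that are zero on the $\vecy$-coordinates coincides, under the obvious identification, with first setting $\vecy=\mathbf{0}$ and then evaluating on $\{0,1\}^{n_z}$. In particular, for any polynomial $Q(\vecy,\vecz)$ one has $\eval_S(Q) = \eval_{\{0,1\}^{n_z}}(Q|_{\vecy=\mathbf{0}})$. Since $\eval_{\{0,1\}^{n_z}}$ is $\F$-linear, it preserves inclusions of spans. Applying it to the containment from the previous paragraph yields
\[
\span\bigl\{\eval_{\{0,1\}^{n_z}}\bigl(\vecz^{=\ell}\cdot\{P(\veca,\vecz)\}\bigr)\bigr\} \;\subseteq\; \span\bigl\{\eval_S\bigl(\vecz^{=\ell}\cdot \partial_{\vecy}^{\leq rk} P\bigr)\bigr\},
\]
and taking dimensions gives $\Gamma_{k,\ell}^{\mathrm{SED}}(P) \leq \Gamma_{rk,\ell}^{\mathrm{PSPD}}(P)$.

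The only mildly delicate point is reconciling the ``$\leq rk$'' order of derivatives produced by the Taylor expansion with the ``$=rk$'' convention in Definition~\ref{defn:pspd}; this is a cosmetic matter since the generating set obtained from Taylor is itself a subset of (the span of) shifted partials of order at most $rk$, and in the projected, $\vecy=\mathbf{0}$ setting, lower-order derivatives can be subsumed into the $=rk$ set by inserting additional shift factors as needed (this being the content of the discussion in \autoref{sec:pspd-discussion}). I do not expect any real obstacle here; the main conceptual step is simply recognizing the Taylor identity as the bridge from evaluations to partial derivatives, which has already been prepared by Lemma~\ref{lemma: low ind degree taylor}.
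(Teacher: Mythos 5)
Your argument matches the paper's proof step for step: apply Lemma~\ref{lemma: low ind degree taylor} to place each $P(\veca,\vecz)$ in the span of $\inparen{\partial_{\vecy}^{\leq rk}P}_{\vecy=\mathbf{0}}$, multiply by $\vecz^{=\ell}$, identify $\eval_{\{0,1\}^{n_z}}$ applied after setting $\vecy=\mathbf{0}$ with $\eval_S$, and take dimensions.

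One correction to your closing aside, however. The proposed fix for the ``$\leq rk$'' versus ``$=rk$'' mismatch --- absorbing lower-order derivatives by inserting extra shift factors --- is not sound: shifting by additional $\vecz$-monomials does not raise the order of a $\vecy$-derivative, and in general $\eval_S\inparen{\vecz^{=\ell}\,\partial_{\vecy}^{\leq rk}P}$ is not contained in $\span\inbrace{\eval_S\inparen{\vecz^{=\ell}\,\partial_{\vecy}^{=rk}P}}$; for instance $P = y z_1 + z_2$ with $r=k=1$ gives $\{z_1,z_2\}$ on the left but only multiples of $z_1$ on the right. Nor is this the content of \autoref{sec:pspd-discussion}, which compares the $\mathrm{mult}$ projection with reductions modulo $x_i^2$ and modulo $x_i^2-x_i$, not ``$\leq$'' with ``$=$''. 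The paper's own proof has exactly the same loose end --- it concludes with the $\partial^{\leq rk}$ set and simply asserts that taking dimensions finishes --- so this is slack in the stated convention for $\Gamma^{\mathrm{PSPD}}$ rather than something you introduced; the right reading is that the measure should be taken with $\partial_{\vecy}^{\leq k}$ (equivalently, that the circuit upper bound of Lemma~\ref{lem: depth 4 circuits psd} holds verbatim for $\leq k$), not that shifts can repair the gap.
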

\begin{proof}
From \autoref{lemma: low ind degree taylor}, we know that 
\begin{eqnarray*}
\set{P(\veca, \vecz) : \veca \in \{0,1\}^{n_y}_{\leq k}} &\subseteq& \span\inparen{\inparen{\partial^{\leq rk} P}_{\vecy = \mathbf{0}}} \\
\implies \set{\vecz^{=\ell} \cdot P(\veca, \vecz) : \veca \in \{0,1\}^{n_y}_{\leq k}} &\subseteq& \span\inparen{\vecz^{=\ell} \cdot \inparen{\partial^{\leq rk} P}_{\vecy = \mathbf{0}}}
\end{eqnarray*}
By looking at the evaluation vectors on $\set{0,1}^{n_z}$, 
\begin{eqnarray*}
\left\{ \eval_{\set{0,1}^{n_z}}\left(\vecz^{=\ell} \cdot  P(\veca, \vecz) \right): \veca \in \{0,1\}^{n_y}_{\leq k} \right\} &\subseteq& \span\left( \eval_{\set{0,1}^{n_z}}\left(\vecz^{=\ell}\cdot  \inparen{\partial^{\leq rk} P}_{\vecy = \mathbf{0}} \right)\right)\\
 & = & \span\left( \eval_{\set{0}^{n_y} \times \set{0,1}^{n_z}}\left(\vecz^{=\ell}\cdot \partial^{\leq rk} P \right)\right)
\end{eqnarray*}
Taking the dimension of the linear spans on both sides completes the proof. 
\end{proof}

\section{Nisan-Wigderson polynomial families}~\label{sec:NW}
In this section, we formally define the family of Nisan-Wigderson polynomials and mention some known results about lower bounds on the their projected shifted partials complexity~\cite{KLSS, KS14, KumarSaptharishi15}. These bounds will be critically used in our proof. 
\begin{definition}[Nisan-Wigderson polynomial families]~\label{def:NW final}
Let $d,m,e$ be arbitrary parameters with $m$ being a power of a prime, and $d,e\leq m$. 
Since $m$ is a power of a prime, let us identify the set $[m]$ with the field $\F_m$ of $m$ elements. 
Note that since $d \leq m$, we have that $[d] \subseteq \F_m$. 
The Nisan-Wigderson polynomial with parameters $d,m,e$, denoted by $\NW_{d,m,e}$ is defined as
\[
\NW_{d,m,e}(\vecx) \spaced{=} \sum_{\substack{p(t) \in \F_m[t]\\ \deg(p) < e}} x_{1,p(1)}\dots x_{d, p(d)}
\]
That is, for every univariate polynomial $p(t) \in \F_m[t]$ of degree less that $e$, we add one monomial that encodes the `graph' of $p$ on the points $[d]$. 

This is a homogeneous, multilinear polynomial of degree $d$ over $dm$ variables with exactly $m^e$ monomials. Furthermore, the polynomial is \emph{set-multilinear} with respect to $\vecx = \vecx_1 \sqcup \cdots \sqcup \vecx_d$ where $\vecx_i = \set{x_{i1},\cdots, x_{im}}$. 
\end{definition}

We now state the following lemma which shows a lower bound on the $\Gamma_{k, \ell}^{\mathrm{PSPD}}(\NW_{d,m,e})$ for an appropriate choice of parameters. We will then use this bound along with \autoref{cor: taylor mult} to show a lower bound on $\Gamma_{k, \ell}^{\mathrm{SED}}(NW_{d,m,e})$. The lower bound on $\Gamma_{k, \ell}^{\mathrm{PSPD}}(\NW_{d,m,e})$ was shown in two independent proofs by Kayal et al.~\cite{KLSS} and by Kumar and Saraf~\cite{KS14}. The version stated below is from a strengthening of these bounds by Kumar and Saptharishi~\cite{KumarSaptharishi15}. 
 
\begin{lemma}\label{lem: psd lower bound for nw} 
For every $d$ and $k = O(\sqrt{d})$ there exists parameters $m,e,  \epsilon$ such that $m = \Theta(d^2)$
 and $\epsilon = \Theta\pfrac{\log d}{\sqrt{d}}$ with
\begin{eqnarray*}
  m^k & \geq & (1+\epsilon)^{2(d-k)}\\
  m^{e-k} & = & \pfrac{2}{1+\epsilon}^{d-k} \cdot \poly(m). 
\end{eqnarray*}
For such a choice of parameters, let $\vecx = \setdef{x_{ij}}{i\in [d]\;,\; j\in [m]} = \vecx_1 \sqcup \cdots \sqcup \vecx_d$ where $\vecx_i = \set{x_{i1}, \ldots, x_{im}}$. Let $\vecy = \vecx_1 \sqcup \cdots \sqcup \vecx_k$ and $\vecz = \vecx \setminus \vecy$. If $\ell$ is a parameter that satisfies $\ell = \frac{n_z}{2} (1 - \epsilon)$, then over any field $\F$, we have\footnote{We remark that in the calculations in \cite{KLSS, KS14, KumarSaptharishi15}, the shifted monomials consist of both the $\vecy$ and $\vecz$ variables, while here we only shift by $\vecz$ variables. But the calculations still go through since the parameters continue to satisfy the constraints needed for soundness of the calculation. }
\[
\Gamma_{k,\ell}^{\mathrm{PSPD}}(\NW_{d,m,e}(\vecy, \vecz)) \spaced{\geq} \binom{n_z}{\ell + d - k} \cdot \exp(-O(\log^2 d)).
\]
\end{lemma}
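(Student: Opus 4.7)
The plan is to adapt the projected-shifted-partial-derivatives calculation of Kayal--Limaye--Saha--Srinivasan and Kumar--Saraf, in the refined form of Kumar--Saptharishi~\cite{KumarSaptharishi15}, to our setting in which shifts are restricted to $\vecz$-variables only.

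First, I would select a rich subset of partial derivatives: namely, those with respect to set-multilinear monomials in $\vecy = \vecx_1 \sqcup \cdots \sqcup \vecx_k$. There are exactly $m^k$ such monomials, and by the set-multilinearity of $\NW_{d,m,e}$ (\autoref{lem: multilinear equivalence}), the derivative by $\vecy^{\vece}$ coincides with the partial evaluation $\NW_{d,m,e}(\vece, \vecz)$, which is itself a set-multilinear polynomial of degree $d-k$ in $\vecz$ with exactly $m^{e-k}$ monomials --- one for each univariate $p(t) \in \F_m[t]$ of degree $<e$ extending the constraints $p(i) = \vece_i$ on $[k]$. Hence the $m^k$ chosen partial derivatives have a clean product-of-disjoint-blocks structure in $\vecz$, which is what the NW combinatorial design is designed to exploit.

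Next, I would set up the evaluation matrix whose rows are indexed by pairs (chosen partial derivative, shift monomial $\vecz^{\vecb}$ with $|\vecb|=\ell$) and whose columns are indexed by $\set{0,1}^{n_z}$. Under a monomial order favoring higher-weight $\vecz$-monomials, the leading $\vecz$-monomial of each pair has degree exactly $\ell + d - k$. The Nisan--Wigderson combinatorial property (two distinct degree-$<e$ polynomials agree on fewer than $e$ points) controls the overlaps between leading monomials arising from different partial derivatives, and an inclusion--exclusion argument identical in structure to the one in \cite{KumarSaptharishi15} shows that, provided the parameters satisfy
\[
m^k \;\geq\; (1+\epsilon)^{2(d-k)} \qquad\text{and}\qquad m^{e-k} \;=\; \pfrac{2}{1+\epsilon}^{d-k}\poly(m),
\]
and one sets $\ell = \tfrac{n_z}{2}(1-\epsilon)$, then the rank of this matrix is at least $\binom{n_z}{\ell + d - k}\cdot\exp(-O(\log^2 d))$.

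Finally, I would verify that replacing the usual $(\vecy\cup\vecz)$-shifts by $\vecz$-only shifts costs nothing in our setting. Since the evaluation set is $S = \set{0}^{n_y}\times\set{0,1}^{n_z}$, any monomial containing a $\vecy$-variable evaluates to zero everywhere on $S$, so it contributes no nontrivial row to the matrix; restricting shifts to $\vecz$-variables therefore does not reduce the attainable rank. The technical heart of the argument remains the inclusion--exclusion step described above, which is precisely where the parameter constraints are tuned so that the ``main term'' $m^k \binom{n_z}{\ell}$ dominates the cross terms coming from pairs of partial derivatives whose supports overlap excessively. I expect this bookkeeping to be the main obstacle, in the sense that it is the only nontrivial calculation; I would import it essentially verbatim from \cite{KumarSaptharishi15}, verifying only that none of its constants depend on the presence of $\vecy$-shifts (they do not, because only the $\vecz$-leading monomials drive the count).
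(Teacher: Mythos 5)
The paper does not give a self-contained proof of this lemma: it quotes the bound from Kumar--Saptharishi~\cite{KumarSaptharishi15}, disposes of the restriction to $\vecz$-only shifts in the accompanying footnote, and relies on \autoref{sec:pspd-discussion} to relate the various $\mathrm{PSPD}$ variants. Your proposal follows the same route, deferring the inclusion--exclusion rank calculation to \cite{KumarSaptharishi15} and arguing that dropping the $\vecy$-shifts is harmless because the set-multilinear $\vecy$-derivatives of $\NW_{d,m,e}$ are purely $\vecz$-polynomials, so only $\vecz$-leading monomials drive the count. So the approach is essentially the paper's.

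One imprecision worth flagging: your observation that $\vecy$-containing shifts evaluate to zero on $S = \set{0}^{n_y}\times\set{0,1}^{n_z}$ establishes only that $\vecx$-shifts with $S$-evaluations yield the same rank as $\vecz$-shifts with $S$-evaluations. It does not by itself bridge to the bound in \cite{KumarSaptharishi15}, which is phrased for $\vecx$-shifts reduced modulo $\setdef{x_i^2-x_i}{i\in[n]}$, i.e.\ evaluations over the full cube $\set{0,1}^n$. The missing link --- that for a set-multilinear $P$ differentiated by set-multilinear $\vecy$-monomials, restricting evaluations from $\set{0,1}^n$ down to $S$ loses nothing --- is exactly what \autoref{sec:pspd-discussion} supplies via the chain $\Gamma^{\mathrm{PSPD}_1}(P) = \Gamma^{\mathrm{PSPD}_2}(P)$ and, for such $P$, $\Gamma^{\mathrm{PSPD}_2}(P) = \Gamma^{\mathrm{PSPD}}(P)$. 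With that bookkeeping made explicit, your sketch is sound and matches the paper's level of detail.
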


From \autoref{cor: taylor mult}, we immediately have the following crucial lemma. 
\begin{lemma}\label{lem: NW complexity lower bound}
Let $d, m, e, \ell$ be parameters as defined in \autoref{lem: psd lower bound for nw} and let $\vecy$ and $\vecz$ be the partition of variables $\vecx$ as in \autoref{lem: psd lower bound for nw}. 
Then,over any field $\F$, we have
\[
\Gamma_{k,\ell}^{\mathrm{SED}}(\NW_{d,m,e}(\vecy, \vecz)) \spaced{\geq} \binom{n_z}{\ell + d - k} \cdot \exp(-O(\log^2 d)).
\]
\end{lemma}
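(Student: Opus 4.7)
The plan is to deduce the bound on shifted evaluation dimension directly from the projected shifted partial derivative lower bound of \autoref{lem: psd lower bound for nw}, using \autoref{cor: taylor mult} as the bridge. No new computation is needed beyond verifying that the set-multilinear hypothesis is compatible with the chosen $(\vecy,\vecz)$ split.

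First, I would observe that the hypotheses of \autoref{cor: taylor mult} are met by $\NW_{d,m,e}$ with the partition fixed in \autoref{lem: psd lower bound for nw}. By \autoref{def:NW final}, $\NW_{d,m,e}$ is set-multilinear with respect to $\vecx = \vecx_1 \sqcup \cdots \sqcup \vecx_d$. The partition chosen in \autoref{lem: psd lower bound for nw} sets $\vecy = \vecx_1 \sqcup \cdots \sqcup \vecx_k$ and $\vecz = \vecx_{k+1} \sqcup \cdots \sqcup \vecx_d$, which is exactly the one obtained by peeling off the first $k$ blocks of the set-multilinear partition. Hence the set-multilinear hypothesis of \autoref{cor: taylor mult} holds verbatim, and we may apply the corollary to conclude $\Gamma_{k,\ell}^{\mathrm{PSPD}}(\NW_{d,m,e}) \leq \Gamma_{k,\ell}^{\mathrm{SED}}(\NW_{d,m,e})$.

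Next, I would note a small compatibility point: the PSPD value in \autoref{lem: psd lower bound for nw} is defined via \emph{all} $k$-th order $\vecy$-partials, whereas \autoref{cor: taylor mult} uses the restriction to set-multilinear partials in $\vecy$. But for $\NW_{d,m,e}$, any $k$-th order $\vecy$-partial that is not set-multilinear across the blocks $\vecx_1,\dots,\vecx_k$ must vanish: either a variable is repeated (killed by multilinearity) or two derivatives are taken from the same block $\vecx_i$ (so the remaining single factor from $\vecx_i$ in each monomial is eventually differentiated twice, giving zero). Hence the two notions of PSPD coincide on $\NW_{d,m,e}$, and the lower bound from \autoref{lem: psd lower bound for nw} is exactly the quantity controlled by \autoref{cor: taylor mult}.

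Combining these observations yields the chain
\[
\Gamma_{k,\ell}^{\mathrm{SED}}(\NW_{d,m,e}) \;\geq\; \Gamma_{k,\ell}^{\mathrm{PSPD}}(\NW_{d,m,e}) \;\geq\; \binom{n_z}{\ell + d - k}\cdot \exp(-O(\log^2 d)),
\]
which is the desired bound. Since the argument is just a direct composition of the two cited results, I do not foresee any substantive obstacle; the only subtle step is the observation in the previous paragraph, which ensures that the syntactic PSPD bound that has already been proved in prior work is in fact a bound on the set-multilinear PSPD to which \autoref{cor: taylor mult} applies.
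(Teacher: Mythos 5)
Your proof is correct and takes the same route as the paper: the lemma is just \autoref{cor: taylor mult} composed with \autoref{lem: psd lower bound for nw}. You are in fact slightly more explicit than the paper, which silently elides the compatibility point you raise; your observation that for $\NW_{d,m,e}$ every order-$k$ $\vecy$-partial that is not set-multilinear across $\vecx_1,\dots,\vecx_k$ vanishes (by multilinearity, or by differentiating twice within one block) is exactly what makes the set-multilinear-restricted PSPD of \autoref{cor: taylor mult} coincide with the full PSPD bounded in \autoref{lem: psd lower bound for nw}.
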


\section{Functional lower bounds for depth-$3$ circuits}~\label{sec:depth 3}
In this section, we complete the  proof of \autoref{thm: depth 3 lower bound}. We start by defining the exact hard polynomial for which our lower bound is shown. 
\subsection*{Hard polynomials for the lower bound}
We will prove \autoref{thm: depth 3 lower bound} for the polynomial $\NW_{d,m,e}$ for an appropriate choice of the parameters.
\begin{lemma}\label{lem : nw partial derivative lower bounds}
Let the parameters $e$ and $d$ be chosen so that $e = d/2-1$, and let $k = e+1$. Let the variables $\vecx$ in $\NW_{d,m,e}$ be partitioned into $\vecy = \setdef{x_{ij}}{i\in [k], j\in [m]}$ and $\vecz = \vecx \setminus \vecy$. 
Then 
\[
\Gamma_{k, 0}^{\mathrm{SED}}(\NW_{d,m,e}(\vecy, \vecz)) \spaced{\geq} m^{d/2}.
\]
\end{lemma}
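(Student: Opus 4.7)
The plan is to lower-bound $\Gamma_{k,0}^{\mathrm{SED}}(\NW_{d,m,e})$ by exhibiting a large family of linearly independent polynomials of the form $\NW_{d,m,e}(\veca, \vecz)$ and then transferring from polynomial dimension to evaluation dimension using multilinearity. Since $\NW_{d,m,e}$ is set-multilinear with respect to the partition $\vecx = \vecx_1 \sqcup \cdots \sqcup \vecx_d$ and $\vecy = \vecx_1 \cup \cdots \cup \vecx_k$, the key identification from \autoref{lem: multilinear equivalence} (equivalently \autoref{cor: taylor mult}) applies: for every set-multilinear monomial $\vecy^{\vece} = x_{1, j_1} \cdots x_{k, j_k}$ of degree exactly $k$, we have $\partial_{\vecy^{\vece}}(\NW_{d,m,e}) = \NW_{d,m,e}(\vece, \vecz)$. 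Since $\|\vece\|_0 = k$, each such $\vece$ already lies in $\{0,1\}^{n_y}_{\leq k}$, so these partial evaluations are among the vectors that define the SED measure.

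Next, I would directly compute the derivatives from the definition of $\NW_{d,m,e}$:
\[
\partial_{\vecy^{\vece}}(\NW_{d,m,e}) \spaced{=} \sum_{\substack{p \in \F_m[t]\,:\,\deg p < e \\ p(i) = j_i \text{ for all } i \in [k]}} x_{k+1, p(k+1)} \cdots x_{d, p(d)}.
\]
Because $k = e+1$, the $k$ interpolation constraints are over-determined on the $e$-dimensional space of polynomials of degree less than $e$: any $e$ of them determine $p$ uniquely by Lagrange interpolation, and the last is a consistency check. So the inner sum collapses to at most a single monomial: for each polynomial $p$ of degree $< e$, the choice $(j_1, \ldots, j_k) = (p(1), \ldots, p(k))$ yields the single monomial $x_{k+1, p(k+1)} \cdots x_{d, p(d)}$, while every other tuple yields the zero polynomial.

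The final step is to check that these monomials are pairwise distinct across distinct $p$. Since $d-k = d/2 \geq e$, two distinct polynomials of degree $< e$ cannot agree on the entire set $\{k+1, \ldots, d\}$ (they would otherwise agree on more than $e-1$ points and thus coincide). This produces one pairwise distinct monomial in $\vecz$ for each of the $m^e$ polynomials $p$ of degree $< e$, which are linearly independent as elements of $\F[\vecz]$. Because each $\NW_{d,m,e}(\vece, \vecz)$ is multilinear in $\vecz$, the evaluation map on $\{0,1\}^{n_z}$ is injective on this span, so polynomial dimension matches evaluation dimension and delivers the desired lower bound of the order of $m^{d/2}$.

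The only subtle point I expect to need careful attention is verifying that no $\veca \in \{0,1\}^{n_y}_{\leq k}$ outside the set-multilinear degree-$k$ family contributes a new polynomial, and thus that the bound obtained via \autoref{lem: multilinear equivalence} is not missing anything. But set-multilinearity of $\NW_{d,m,e}$ forces $\NW_{d,m,e}(\veca, \vecz) \equiv 0$ as soon as some group $\vecx_i$, $i \in [k]$, has all of its coordinates zero under $\veca$, since every monomial of $\NW_{d,m,e}$ then contains a vanishing factor from $\vecx_i$. Since $\|\veca\|_0 \leq k$ and each of the $k$ groups must contain at least one $1$ to avoid this vanishing, the only surviving substitutions are precisely the set-multilinear ones of degree $k$ that are already handled above.
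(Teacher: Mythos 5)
Your proof follows the same overall strategy as the paper's: restrict attention to set-multilinear degree-$k$ substitutions in $\vecy$, identify these with $k$-th order partial derivatives of $\NW_{d,m,e}$ via set-multilinearity (\autoref{lem: multilinear equivalence}), and count the distinct nonzero resulting monomials in $\vecz$. The key arithmetic observations (at most one polynomial of degree $<e$ through $k=e+1$ interpolation constraints; distinctness of the $\vecz$-monomials because distinct degree-$<e$ polynomials must disagree somewhere on the $d-k$ remaining points) are exactly those of the paper. So: same approach.

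However, there is a point worth flagging. You correctly observe that only $m^e$ of the $m^k$ set-multilinear tuples yield a nonzero partial evaluation, since each nonzero evaluation corresponds to a unique polynomial $p$ of degree $<e$. That gives a count of $m^e = m^{d/2-1}$, not $m^{d/2}$, and you paper over this with the phrase ``of the order of $m^{d/2}$'' --- which it is not, since $m^e$ and $m^{d/2}$ differ by a multiplicative factor of $m = \poly(d)$. The paper's own proof has the same discrepancy but handles it less carefully: it asserts that \emph{every} monomial in $S$ gives a distinct nonzero $\vecz$-monomial and concludes $|S| = m^{d/2}$, which overcounts because most of those derivatives are zero. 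In that sense your version is actually the more careful one, though neither literally proves the stated bound of $m^{d/2}$; the correct bound proven by the argument is $m^{d/2-1}$. This off-by-one in the exponent is harmless for the downstream conclusion (one still gets $\size(C) \geq m^{d/2-1}/2^d = \exp(\Omega(d\log n))$ for $m = \poly(d)$), so it is a cosmetic issue, but the lemma as stated does not quite follow.

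One more minor note: your final paragraph, verifying that no $\veca \in \{0,1\}^{n_y}_{\leq k}$ outside the set-multilinear family contributes, is unnecessary for a \emph{lower} bound on $\Gamma_{k,0}^{\mathrm{SED}}$; exhibiting a large independent subfamily already suffices. The observation itself is correct and would be needed for an exact determination of the measure, but it is not required here.
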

\begin{proof}
Let the set of monomials $S$ be defined as  
\[
S = \left\{\prod_{i = 1}^k x_{i,j_i} : j_i \in [m] \right\}
\]
Observe that for every monomial $\vecx^\alpha$ in $S$, the partial derivative of $\NW_{d,m,e}$ with respect to $\vecx^\alpha$, is a monomial in $\vecz$. This is due to the fact that $e < d/2$ and no two distinct univariate polynomials of degree $d/2$ can agree at more than $d/2$ many points. Moreover for  every two distinct monomials $\vecx^\alpha$ and $\vecx^\beta$ in $S$, 
\[
\frac{{\partial \NW_{d,m,e}}}{{\partial 
\vecx^\alpha}} \spaced{\neq} \frac{\partial \NW_{d,m,e}}{\partial \vecx^\beta}
\]
Hence, 
\[
\Gamma_{k, 0}^{\mathrm{PSPD}}(\NW_{d,m,e}) = |S| = m^{d/2} 
\]
Since $\NW_{d,m,e}$ is a set-multilinear with respect to the rows of variable matrix, by \autoref{lem: multilinear equivalence}, it follows that 
\[
\Gamma_{k, 0}^{\mathrm{SED}}(\NW_{d,m,e}) = m^{d/2} \qedhere
\]
\end{proof}

\subsection*{Complexity of the model}
\begin{lemma}\label{lem: depth 3 circuit complexity upper bound}
The $C(\vecx)$ be a $\SPS$ circuit of formal degree $d$ and top fan-in $s$. Then, for all choices of $k$ and any partition of $\vecx$ into $\vecy$ and $\vecz$,  
\[
\Gamma_{k, 0}^{\mathrm{SED}}(C) \spaced{\leq} s\cdot 2^d 
\]
\end{lemma}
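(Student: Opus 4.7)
Write $C = \sum_{i=1}^{s} T_i$ where each $T_i = \prod_{j=1}^{d_i} L_{ij}$ is a product of affine forms, with $d_i \le d$ by the formal-degree hypothesis. Since dimension is subadditive under sums, it suffices to show that the contribution of each $T_i$ to $\Gamma_{k,0}^{\mathrm{SED}}(C)$ is at most $2^d$; the claim then follows by a union-bound/sum of these bounds across the $s$ product gates.

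Fix one such $T_i$ and decompose each affine factor according to the partition $\vecx = \vecy \sqcup \vecz$, writing
\[
L_{ij}(\vecy, \vecz) \spaced{=} u_{ij}(\vecy) + \lambda_{ij}(\vecz),
\]
where $u_{ij}$ is an affine form in $\vecy$ (absorbing the constant term of $L_{ij}$) and $\lambda_{ij}$ is a homogeneous linear form in $\vecz$. Expanding the product multilinearly,
\[
T_i(\vecy, \vecz) \spaced{=} \prod_{j=1}^{d_i}\bigl(u_{ij}(\vecy) + \lambda_{ij}(\vecz)\bigr) \spaced{=} \sum_{S \subseteq [d_i]} \Bigl(\prod_{j \in S} u_{ij}(\vecy)\Bigr) \cdot \Bigl(\prod_{j \notin S} \lambda_{ij}(\vecz)\Bigr).
\]
For any fixed assignment $\veca \in \{0,1\}^{n_y}$, the factor $\prod_{j\in S}u_{ij}(\veca)$ is merely a scalar, so $T_i(\veca, \vecz)$ is a $\F$-linear combination of the $2^{d_i}$ fixed polynomials $\psi_{i,S}(\vecz) := \prod_{j \notin S}\lambda_{ij}(\vecz)$. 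Consequently, as $\veca$ ranges over $\{0,1\}^{n_y}_{\le k}$, the polynomials $T_i(\veca, \vecz)$ lie inside the span of $\{\psi_{i,S}(\vecz) : S \subseteq [d_i]\}$, a space of dimension at most $2^{d_i} \le 2^d$. Taking evaluation vectors on $\{0,1\}^{n_z}$ is a linear map and can only decrease dimension, so $\dim\{\eval_{\{0,1\}^{n_z}}(T_i(\veca, \vecz)) : \veca \in \{0,1\}^{n_y}_{\le k}\} \le 2^d$.

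Summing the contributions across the $s$ product gates gives $\Gamma_{k,0}^{\mathrm{SED}}(C) \le s \cdot 2^d$. There is no real obstacle here: the whole argument is the standard observation that a product of affine forms has only $2^{d}$ many ``$\vecz$-sub-monomials'' when the $\vecy$-coordinates are frozen; the only thing to be careful about is to absorb the constant term into the $\vecy$-part so that the $\vecz$-part $\lambda_{ij}$ is genuinely fixed independent of $\veca$, which makes the span over $\veca$ well-defined of size $2^{d_i}$.
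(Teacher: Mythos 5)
Your proof is correct and follows the same route as the paper: split each affine factor into its $\vecy$-part and $\vecz$-part, expand the product over subsets of factors, observe that for a fixed assignment $\veca$ the result lies in the span of the $2^{d_i}$ polynomials $\prod_{j\notin S}\lambda_{ij}(\vecz)$, and invoke subadditivity over the $s$ product gates. The parenthetical about where to absorb the constant term is harmless but not actually needed — the $\vecz$-part is a fixed polynomial independent of $\veca$ regardless of where the constant lives.
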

\begin{proof}
Observe that for any choice of $k$ and $\ell$, $\Gamma_{k,\ell}^{\mathrm{SED}}$ is a subadditive measure. Therefore, it is enough to upper bound the value of $\Gamma_{k, 0}^{\mathrm{SED}}()$ for every product gate in $C$ by $2^d$. 
Let 
\[
Q(\vecy, \vecz) = \prod_{i = 1}^d L_i
\] 
be any product gate of formal degree at most $d$ in $C$. Since each $L_i$ is a linear form, we can express it as $L_i = L_{yi} + L_{zi}$, where $L_{yi}$ and $L_{zi}$ are the parts of $L_i$ consisting entirely of $\vecy$ and $\vecz$ variables respectively. Therefore, 
\[
Q(\vecy, \vecz) = \sum_{S\subseteq [d]}\prod_{i\in S} L_{yi} \cdot \prod_{j \notin S} L_{zj}
\] 
Now observe that by 
\[
\left\{  Q(\veca, \vecz) : \veca \in \{0,1\}^{n_y} \right\} \spaced{\subseteq} \span\left(\left\{\prod_{j \notin S} L_{zj} : S \subseteq [d]\right\} \right) 
\]
Therefore, 
\[
\Gamma_{k, 0}^{\mathrm{SED}}(C) \spaced{\leq}  2^d 
\]
The lemma now follows by subadditivity.
\end{proof}

\subsection*{Wrapping up the proof}
We are now ready to complete the proof of \autoref{thm: depth 3 lower bound}. 

\begin{theorem}
Let $\F$ be any field, and let $d,m,e$ be parameters such that $e = d/2-1$ and $m = \poly(d)$. Let $C$ be a $\SPS$ circuit of formal degree $d$ which is functionally equivalent to the polynomial $\NW_{d,m,e}$. Then
\[
 \text{Size}(C) \geq m^{d/2}/2^d
 \]
\end{theorem}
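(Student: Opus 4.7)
The entire proof is an assembly of three ingredients already established in the excerpt: the invariance of shifted evaluation dimension under functional equivalence (Lemma ``complexity measure utility''), the lower bound on $\Gamma_{k,0}^{\mathrm{SED}}(\NW_{d,m,e})$ (Lemma ``nw partial derivative lower bounds''), and the upper bound on $\Gamma_{k,0}^{\mathrm{SED}}$ for depth-$3$ circuits (Lemma ``depth 3 circuit complexity upper bound''). No new calculation is needed; the plan is to choose the parameter $k$ and the partition of variables appropriately and then chain the three bounds.

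First I would fix $k = e+1 = d/2$ and partition the variables as $\vecx = \vecy \sqcup \vecz$ with $\vecy = \setdef{x_{ij}}{i \in [k], j \in [m]}$ and $\vecz = \vecx \setminus \vecy$, exactly as in Lemma ``nw partial derivative lower bounds''. This immediately gives
\[
\Gamma_{k, 0}^{\mathrm{SED}}(\NW_{d,m,e}) \;\geq\; m^{d/2}.
\]
Next, assume for contradiction that $C$ is a $\Sigma\Pi\Sigma$ circuit of formal degree $d$, of size $s < m^{d/2}/2^d$, that is functionally equivalent to $\NW_{d,m,e}$ over $\{0,1\}^{n}$. Applying Lemma ``depth 3 circuit complexity upper bound'' to $C$ with the same $k$ and partition yields
\[
\Gamma_{k, 0}^{\mathrm{SED}}(C) \;\leq\; s \cdot 2^d.
\]

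The crucial step is Lemma ``complexity measure utility'': since $C$ and $\NW_{d,m,e}$ agree as functions on $\{0,1\}^{n}$, and since the measure $\Gamma_{k,0}^{\mathrm{SED}}$ is defined purely in terms of evaluations on a subcube of the Boolean hypercube, we have
\[
\Gamma_{k, 0}^{\mathrm{SED}}(C) \;=\; \Gamma_{k, 0}^{\mathrm{SED}}(\NW_{d,m,e}).
\]
Chaining the three bounds gives $s \cdot 2^d \geq m^{d/2}$, i.e.\ $s \geq m^{d/2}/2^d$, which is the desired conclusion.

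I do not anticipate any real obstacle here: the whole point of introducing the shifted evaluation dimension and proving its functional invariance was precisely to make this assembly routine. The only minor thing to verify is that the upper-bound lemma genuinely applies to $\Sigma\Pi\Sigma$ gates of formal degree $d$ (not homogeneous degree), which it does since the lemma is stated for formal degree; and that the lower-bound lemma's choice $k = d/2$, $\ell = 0$ is covered by its proof (it is, since that lemma only uses $e = d/2-1$ and set-multilinearity, not the more delicate parameter regime of Lemma ``psd lower bound for nw'').
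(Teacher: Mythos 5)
Your proof is correct and takes essentially the same approach as the paper: both fix $k = e+1 = d/2$ with the row-based partition and obtain the bound by chaining the lower bound on $\Gamma_{k,0}^{\mathrm{SED}}(\NW_{d,m,e})$ against the upper bound on $\Gamma_{k,0}^{\mathrm{SED}}$ for formal-degree-$d$ $\Sigma\Pi\Sigma$ circuits. The paper leaves the invocation of the functional-equivalence invariance of $\Gamma^{\mathrm{SED}}$ implicit in the phrase ``immediately follows,'' but you are right to make it explicit.
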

\begin{proof}
Let $k=e+1$ and consider a partition of variables into $\vecy$ and $\vecz$ where all the variables in the first $k$ rows of the variable matrix are labelled $\vecy$ and the remaining variables are labelled $\vecz$. Now, the theorem immediately follows from \autoref{lem : nw partial derivative lower bounds} and \autoref{lem: depth 3 circuit complexity upper bound}.
\end{proof}

\section{Functional lower bounds for depth-$4$ circuits}~\label{sec:depth 4}

In this section, we prove \autoref{thm: depth 4 lower bound}. We first define the family of polynomials for which our lower bounds apply. 

\subsection*{Hard polynomials for the lower bound}

For the proof of \autoref{thm: depth 4 lower bound}, we would have to show that a statement in the spirit of \autoref{lem: NW complexity lower bound} is also true for a \emph{random projection} of our hard polynomial. 
Even though we believe\footnote{In fact, \cite{KLSS, KS14} showed such statements to be true.} that this is true for the polynomial defined in \autoref{def:NW final}, for simplicity, we modify our hard polynomial and in turn prove a lower bound for the following variant of it. 

\begin{definition}[Hard polynomials for the lower bound]
Let $d, m, e$ be parameters as defined in \autoref{def:NW final}.  Let $p = p(m, d)$ be a parameter and let 
\[
t = \frac{dm}{p}
\]
The polynomial $\NWL$ is defined as 
\[
\NWL_{d,m,e,p} = \NW_{d, m, e}\left(L(x_{1,1}), L(x_{1,2}), \dots, L(x_{d,m}) \right)
\]
where for each $i \in [d], j \in [m]$, $L(x_{i,j})$ is defined as
\[
L(x_{i,j}) = \sum_{u = 1}^t x_{i,j,u}
\]
\end{definition}

For the rest of this proof, we set $p = (md)^{-0.1}$, and for brevity, we will indicate $\NWL_{d,m,e, (md)^{0.1}}$ by $\NWL_{d,m,e}$. Observe that setting $p$ sets $t$ to be equal to $(md)^{1.1}$. We conclude this section with the next lemma where we show that $\NWL_{d,m,e}$ is \emph{robust} under random restrictions where every variable is kept alive with a probability $p$.

\begin{lemma}\label{lem: robustness under random restrictions}
Let $p$ and $t$ be as stated above and let $n = dm$. Let $P$ be a random projection of $\NWL$ obtained by setting every variable in $\{x_{i,j,h} : i \in [d], j \in [m], h \in [t]\}$ to zero with a probability equal to $1-p$. Then, with a probability at least $1-o(1)$, $\NW_{d,m,e}$ is a projection of $P$. 
\end{lemma}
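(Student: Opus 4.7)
The plan is to carry out a two-stage substitution argument: first perform the random zero-restriction as in the statement, and then apply a further deterministic substitution that, conditioned on a likely event, collapses each $L(x_{i,j})$ to a single fresh variable $x_{i,j}$. This will exhibit $\NW_{d,m,e}$ as a projection of $P$.

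More concretely, for each pair $(i,j) \in [d] \times [m]$, let $S_{i,j} \subseteq [t]$ denote the (random) set of indices $u$ such that the variable $x_{i,j,u}$ survives the restriction. After restriction, each linear form $L(x_{i,j})$ becomes $\sum_{u \in S_{i,j}} x_{i,j,u}$. The key observation is that if the event $\mathcal{E} := \bigcap_{i,j} \{|S_{i,j}| \geq 1\}$ holds, then we can pick an arbitrary surviving index $u^*_{i,j} \in S_{i,j}$, relabel $x_{i,j,u^*_{i,j}}$ as a fresh variable $x_{i,j}$, and set every other surviving variable $x_{i,j,u}$ (with $u \in S_{i,j} \setminus \{u^*_{i,j}\}$) to zero. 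This substitution turns each $L(x_{i,j})$ into the single variable $x_{i,j}$, and hence turns the restricted $\NWL_{d,m,e}$ into exactly $\NW_{d,m,e}(\{x_{i,j}\})$. So it suffices to show $\Pr[\mathcal{E}] \geq 1 - o(1)$.

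For this probability bound, I will use independence of the restrictions together with a union bound. For a fixed $(i,j)$, the set $S_{i,j}$ is empty with probability
\[
(1-p)^t \;\leq\; \exp(-pt) \;=\; \exp(-dm),
\]
since the parameter choice $p = (md)^{-0.1}$, $t = dm/p = (md)^{1.1}$ gives $pt = dm$. A union bound over the $dm$ pairs $(i,j)$ then yields
\[
\Pr[\neg \mathcal{E}] \;\leq\; dm \cdot \exp(-dm) \;=\; o(1),
\]
which is exactly the claimed bound. The argument is completely routine once the parameters are matched; the only thing to verify carefully is that the choice $p = (md)^{-0.1}$ and $t = (md)^{1.1}$ makes $pt$ large enough (namely $\Omega(\log(dm))$ would already suffice, and here it is in fact $dm$) so that the union bound over $dm$ events still leaves failure probability $o(1)$. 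There is no real obstacle beyond this bookkeeping, since the deterministic collapse from surviving linear forms to fresh single variables is immediate from the definition of $\NWL_{d,m,e}$ as a substitution of the linear forms $L(x_{i,j})$ into the set-multilinear polynomial $\NW_{d,m,e}$.
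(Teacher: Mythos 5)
Your proposal is correct and follows essentially the same route as the paper: define the event that every group $\{x_{i,j,h} : h \in [t]\}$ retains at least one live variable, bound its failure probability by $dm(1-p)^t = dm\cdot\exp(-\Omega(dm))$ via a union bound, and on the good event set all but one surviving variable per $(i,j)$ to zero so that each linear form $L(x_{i,j})$ collapses to a single variable, recovering $\NW_{d,m,e}$ up to relabeling. The parameter bookkeeping ($pt = dm$) matches the paper's calculation exactly.
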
 
\begin{proof}
For every $i \in [d]$, $j \in [m]$, define the set $A_{i,j}$ as
\[
A_{ij} = \{x_{i,j,h} : h \in [t]\}
\]
When every variable is being set to zero with a probability $1-p$, the probability that there exists an $i \in [d]$ and $j \in [m]$ such that all the variables in the set $A_{i,j}$ are set to zero is at most $dm(1-p)^t$. For $p = n^{-0.1}$, the probability is at most $n\dot (1-n^{-0.1})^{n^{1.1}}$ which is $\exp(-\Omega(n))$. 

Therefore, with a probability at least $1-\exp(-\Omega(n))$, each of the set $A_{i,j}$ has at least one variable alive in $P$. Now, we set all but one of them to zero for each $i,j$. Observe that the resulting projection of $P$ is precisely $\NW_{d,m,e}$ up to a relabelling of variables. This proves the lemma. 
\end{proof} 

It should be noted that the polynomial $\NWL$ continues to remain set-multilinear with respect to he rows of the variable matrix. 

\subsection*{Upper bound on the complexity of the model}
We now show the upper bound on $\Gamma_{k, \ell}^{\mathrm{SED}}(C)$ when $C$ is a depth-$4$ circuit of individual degree at most $r$ and bottom support $s$. We will use the following upper bound on $\Gamma_{k, \ell}^{\mathrm{PSPD}}(C)$ from \cite{KLSS, KS14}. 
\begin{lemma}\label{lem: depth 4 circuits psd}
Let $C(\vecy, \vecz)$ be a depth-$4$ circuit, of formal degree at most $d$ and bottom support at most $s$.  Let $k$ and $\ell$ be  parameters satisfying$\ell + ks < n_z/2$. Then
\[
\Gamma_{k,\ell}^{\mathrm{PSPD}}(C) \spaced{\leq} \text{Size}(C) \cdot \binom{O\left( \frac{d}{s} \right) + k }{k} \cdot \binom{n_z}{\ell + ks}\cdot \poly(n).
\]
\end{lemma}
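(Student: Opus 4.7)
The plan is to adapt the projected-shifted-partial-derivative upper bound argument from Kayal-Limaye-Saha-Srinivasan and Kumar-Saraf to our (mildly modified) measure $\Gamma^{\mathrm{PSPD}}_{k,\ell}$. There are four natural steps.

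\textbf{Subadditivity and reduction to a single product gate.} Writing $C = \sum_{i=1}^{\mathrm{Size}(C)} T_i$ as a sum of depth-three product gates, the operations of partial differentiation, shifting by $\vecz^{\vecf}$, and taking the evaluation vector at $\set{0}^{n_y} \times \set{0,1}^{n_z}$ are all linear in $C$. Hence $\Gamma_{k,\ell}^{\mathrm{PSPD}}(C) \leq \sum_{i} \Gamma_{k,\ell}^{\mathrm{PSPD}}(T_i)$, so it suffices to bound $\Gamma_{k,\ell}^{\mathrm{PSPD}}(T)$ by $\binom{O(d/s)+k}{k}\cdot\binom{n_z}{\ell+ks}\cdot\poly(n)$ for one such gate $T = \prod_{j=1}^{D} Q_j$.

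\textbf{Normalizing the product fan-in to $O(d/s)$.} Whenever two factors $Q_{j_1}, Q_{j_2}$ in $T$ satisfy $\deg(Q_{j_1}) + \deg(Q_{j_2}) \leq s$, their product $Q_{j_1}\cdot Q_{j_2}$ has total degree at most $s$, and in particular each monomial of it has support at most $s$. Iteratively coalescing such pairs yields an equivalent representation $T = \prod_{j=1}^{D'} \widetilde Q_j$ in which at most one factor has degree $\leq s/2$ and every factor has monomial support at most $s$. Since $\sum_{j} \deg(\widetilde Q_j) \leq d$, we obtain $D' \leq 2d/s + 1 = O(d/s)$.

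\textbf{Leibniz expansion and support control.} Write $\widetilde Q_j = \alpha_j + R_j$ where $\alpha_j := \widetilde Q_j \big|_{\vecy=\mathbf{0}} \in \F[\vecz]$ and $R_j$ collects the monomials of $\widetilde Q_j$ that involve at least one $\vecy$ variable. By the Leibniz rule,
\[
\partial_{\vecy^{\vece}}(T)\Big|_{\vecy = \mathbf{0}} \spaced{=} \sum_{\substack{S \subseteq [D'],\,|S|\leq k \\ (\vece^{(j)})_{j\in S}:\,\sum_j \vece^{(j)}=\vece}} c_{\vece^{(\cdot)}} \cdot \inparen{\prod_{j \notin S}\alpha_j} \cdot \inparen{\prod_{j \in S} \beta_j^{(\vece^{(j)})}},
\]
where $\beta_j^{(\vece^{(j)})} \in \F[\vecz]$ is the coefficient of $\vecy^{\vece^{(j)}}$ in $\widetilde Q_j$. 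Because every monomial of $\widetilde Q_j$ has support at most $s$, each $\beta_j^{(\vece^{(j)})}$ is a $\vecz$-polynomial whose monomials use at most $s$ variables; hence $\prod_{j \in S} \beta_j^{(\vece^{(j)})}$ involves at most $|S|\cdot s \leq ks$ distinct $\vecz$-variables.

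\textbf{Replacing by multilinear monomials and counting.} Since evaluations are over $\set{0,1}^{n_z}$, every polynomial agrees with its multilinearization. Therefore $\vecz^{\vecf}\cdot \prod_{j \in S}\beta_j^{(\vece^{(j)})}$ has the same evaluation vector on $\set{0,1}^{n_z}$ as some linear combination of multilinear monomials $\mu$ of degree at most $\ell + ks$. Consequently
\[
\eval_{\set{0}^{n_y}\times\set{0,1}^{n_z}}\inparen{\vecz^{=\ell}\cdot \partial_{\vecy}^{=k}(T)} \;\subseteq\; \span\inbrace{\eval_{\set{0,1}^{n_z}}\inparen{\mu \cdot \prod_{j \notin S}\alpha_j} \,:\, |S|\leq k,\; \mu \text{ mlm}, \deg(\mu)\leq \ell+ks}.
\]
The right-hand side is spanned by at most $\binom{D'}{\leq k}\cdot\binom{n_z}{\leq \ell+ks}$ vectors. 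Using $\binom{D'}{\leq k}\leq \binom{D'+k}{k} = \binom{O(d/s)+k}{k}$, and using the hypothesis $\ell+ks < n_z/2$ to absorb $\binom{n_z}{\leq \ell+ks} \leq \poly(n)\cdot\binom{n_z}{\ell+ks}$ (binomial coefficients are monotone up to the middle), we obtain $\Gamma_{k,\ell}^{\mathrm{PSPD}}(T) \leq \binom{O(d/s)+k}{k}\cdot\binom{n_z}{\ell+ks}\cdot\poly(n)$, and summing over all $\mathrm{Size}(C)$ product gates finishes the proof.

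\textbf{The main obstacle} is the normalization step: one must verify that coalescing small-degree bottom factors (a purely algebraic regrouping that does not change the polynomial $T$) simultaneously produces both $O(d/s)$ factors and preserves ``support at most $s$'' per monomial. The observation that \emph{degree} bounds \emph{support} for any single polynomial is what glues the bottom-support hypothesis to the desired $d/s$ fan-in; without it one gets only a trivial fan-in bound of $d$, and the key combinatorial term $\binom{O(d/s)+k}{k}$ would not shrink enough to beat the lower bound of \autoref{lem: NW complexity lower bound}.
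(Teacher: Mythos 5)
The paper does not actually prove this lemma; it is quoted directly as a known bound from \cite{KLSS,KS14}, so there is no in-paper argument to compare against. Your reconstruction is the standard KLSS/KS14 argument (subadditivity, coalesce low-degree bottom factors to force fan-in $O(d/s)$ while preserving per-monomial support $\leq s$, Leibniz expansion with $\vecy=\mathbf{0}$, multilinearization over $\set{0,1}^{n_z}$, and a binomial count), and it is essentially correct.

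One sentence in your Leibniz step is literally false as written: you claim $\prod_{j\in S}\beta_j^{(\vece^{(j)})}$ ``involves at most $|S|\cdot s\leq ks$ distinct $\vecz$-variables.'' That product can in general touch far more than $ks$ variables across its monomials. What is true --- and what you actually use in the very next sentence --- is that \emph{each monomial} of $\prod_{j\in S}\beta_j^{(\vece^{(j)})}$ has support at most $|S|\cdot s\leq ks$, since each $\beta_j^{(\vece^{(j)})}$ inherits per-monomial support $\leq s$ from $\widetilde Q_j$. With that phrasing corrected, the chain closes exactly as you describe: $\vecz^{\vecf}\cdot\prod_{j\in S}\beta_j^{(\vece^{(j)})}$ evaluates on $\set{0,1}^{n_z}$ identically to a combination of multilinear monomials $\mu$ of degree $\leq \ell+ks$, so $\eval_{\set{0}^{n_y}\times\set{0,1}^{n_z}}(\vecz^{=\ell}\partial_\vecy^{=k}T)$ lies in the span of $\eval(\mu\cdot\prod_{j\notin S}\alpha_j)$, and the count $\binom{O(d/s)+k}{k}\cdot\binom{n_z}{\leq\ell+ks}$ together with $\ell+ks<n_z/2$ gives the stated bound. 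A very minor point: your fan-in bound $D'\leq 2d/s+1$ reads as $O(d/s)$ only when $s=O(d)$, which is the regime of interest here (and in \autoref{lem: depth 4 circuits sed-low bottom support} one has $s\leq d$ anyway).
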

The following lemma now immediately follows from \autoref{lem: depth 4 circuits psd} and \autoref{lemma: eval dim vs partial derivatives}. 
\begin{lemma}\label{lem: depth 4 circuits sed-low bottom support}
Let $C(\vecy, \vecz)$ be a depth-$4$ circuit, of formal degree at most $d$, individual degree at most $r$ and bottom support at most $s$.  Let $k$ and $\ell$ be  parameters satisfying $\ell + krs < n_z/2$. Then
\[
\Gamma_{k,\ell}^{\mathrm{SED}}(C) \spaced{\leq} \text{Size}(C) \cdot \binom{O\left( \frac{d}{s} \right) + kr }{kr} \cdot \binom{n_z}{\ell + krs}\cdot \poly(n_z).
\]
\end{lemma}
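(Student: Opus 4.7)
The plan is to chain together the two prior results directly: Lemma \ref{lemma: eval dim vs partial derivatives} lets us pass from shifted evaluation dimension to projected shifted partial derivatives dimension (at the cost of multiplying the order of differentiation by $r$), and Lemma \ref{lem: depth 4 circuits psd} provides the desired upper bound for PSPD on depth-$4$ circuits of bounded bottom support. No new measure-theoretic or combinatorial ideas are required; the lemma is essentially a bookkeeping step that propagates the individual-degree hypothesis through the chain.

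Concretely, I would first apply Lemma \ref{lemma: eval dim vs partial derivatives} to the circuit $C(\vecy,\vecz)$. Since $C$ has individual degree at most $r$, that lemma yields
\[
\Gamma_{k,\ell}^{\mathrm{SED}}(C) \;\leq\; \Gamma_{rk,\ell}^{\mathrm{PSPD}}(C).
\]
Next, I would apply Lemma \ref{lem: depth 4 circuits psd} with derivation order $k' := rk$ in place of $k$. The only hypothesis to check is the constraint $\ell + k' s < n_z/2$, which becomes $\ell + rks < n_z/2$ — this is exactly the assumption stated in the lemma. Invoking the PSPD bound then gives
\[
\Gamma_{rk,\ell}^{\mathrm{PSPD}}(C) \;\leq\; \text{Size}(C) \cdot \binom{O(d/s) + rk}{rk} \cdot \binom{n_z}{\ell + rks} \cdot \poly(n_z).
\]
Composing the two inequalities yields the claimed bound.

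There is no substantive obstacle here, since both ingredients are already available: Lemma \ref{lemma: eval dim vs partial derivatives} is the key structural lemma (already proved via Taylor expansion exploiting the individual-degree bound), and Lemma \ref{lem: depth 4 circuits psd} is quoted from \cite{KLSS, KS14}. The only thing worth being careful about is ensuring the parameter substitutions line up — in particular, that the binomial $\binom{O(d/s)+rk}{rk}$ and the shift $\ell+rks$ in the second factor both arise from replacing $k$ with $rk$ consistently, and that the hypothesis of the PSPD lemma is satisfied with this replacement. Once this is checked, the proof is a one-line composition.
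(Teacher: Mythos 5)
Your proof is correct and is exactly what the paper does: the paper states that the lemma "immediately follows" from Lemma~\ref{lemma: eval dim vs partial derivatives} and Lemma~\ref{lem: depth 4 circuits psd}, and your composition with $k' = rk$ (and the check that $\ell + rks < n_z/2$ is precisely the stated hypothesis) spells out that one-line argument.
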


\subsection*{Wrapping up the proof}
\begin{theorem}
Let $d, m, e$ be parameters as defined in \autoref{lem: psd lower bound for nw}. Let $C$ be a $\SPSP$ circuit $C$ of formal degree $d$ and individual degree at most $r = O(1)$ over any field $\F$ such that $C$ is functionally equivalent to $\NWL_{d,m,e}$. Then, 
\[
\text{Size}(C) \geq \exp\left(\Omega\left(\sqrt{d}\log dm \right) \right)
\]
\end{theorem}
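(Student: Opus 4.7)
The plan is to reduce the problem to the bounded-bottom-support regime by means of a random restriction, and then pit the SED lower bound on $\NW$ (\autoref{lem: NW complexity lower bound}) against the SED upper bound for depth-$4$ circuits of small bottom support and small individual degree (\autoref{lem: depth 4 circuits sed-low bottom support}). The robustness of $\NWL$ under mild random restrictions (\autoref{lem: robustness under random restrictions}) is exactly what enables this reduction, and is the reason the hard polynomial is $\NWL$ rather than $\NW$ itself.

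I would proceed as follows. Suppose for contradiction that $C$ is a $\Sigma\Pi\Sigma\Pi$ circuit of size $S \leq \exp(c\sqrt{d}\log dm)$ for a sufficiently small constant $c > 0$, of formal degree $d$ and individual degree $r = O(1)$, that is functionally equivalent to $\NWL_{d,m,e}$ on $\{0,1\}^{dmt}$. Apply a random restriction $\rho$ that independently sets each variable $x_{i,j,u}$ to $0$ with probability $1-p$, where $p = (dm)^{-0.1}$ matches the parameter fixed in the definition of $\NWL$. Pick a threshold $s^\ast = c'\sqrt{d}$ for a small constant $c'$. A fixed bottom monomial of support $w$ survives $\rho$ with probability $p^w$, so a union bound over the at most $S$ bottom monomials shows that $C|_\rho$ has bottom support at most $s^\ast$ with probability $1-o(1)$. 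Simultaneously, by \autoref{lem: robustness under random restrictions}, every set $A_{i,j}$ retains at least one live variable with probability $1-o(1)$. Fix a $\rho$ in the intersection of these two events and extend it to $\rho'$ by fixing all but one live variable in each $A_{i,j}$ to zero and relabelling; then $(\NWL_{d,m,e})|_{\rho'} = \NW_{d,m,e}$, so $C' := C|_{\rho'}$ is a $\Sigma\Pi\Sigma\Pi$ circuit on $dm$ variables of formal degree $\leq d$, individual degree $\leq r$, bottom support $\leq s^\ast$, size $\leq S$, and $C'$ is functionally equivalent to $\NW_{d,m,e}$ on $\{0,1\}^{dm}$.

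Now pick the partition $\vecx = \vecy \sqcup \vecz$ and parameters $k = \Theta(\sqrt{d})$, $\ell = \tfrac{n_z}{2}(1-\epsilon)$ as in \autoref{lem: psd lower bound for nw}. Combining \autoref{lem: complexity measure utility} with \autoref{lem: NW complexity lower bound} gives
\[
\Gamma_{k,\ell}^{\mathrm{SED}}(C') \;=\; \Gamma_{k,\ell}^{\mathrm{SED}}(\NW_{d,m,e}) \;\geq\; \binom{n_z}{\ell+d-k} \cdot \exp(-O(\log^2 d)),
\]
while \autoref{lem: depth 4 circuits sed-low bottom support} applied to $C'$ yields
\[
\Gamma_{k,\ell}^{\mathrm{SED}}(C') \;\leq\; |C'| \cdot \binom{O(d/s^\ast) + kr}{kr} \cdot \binom{n_z}{\ell+krs^\ast}\cdot \poly(n_z).
\]
With $k, s^\ast = \Theta(\sqrt{d})$ and $r = O(1)$, the middle binomial is $2^{O(\sqrt{d})}$, and by choosing $c'$ small enough we ensure $krs^\ast < d-k$, so that both binomial shifts lie below $n_z/2$ and differ by $\Theta(d)$. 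The standard Stirling estimate (essentially that of \cite{KLSS, KS14}) then yields a ratio of binomials of magnitude $\exp(\Omega(\sqrt{d}\log n_z)) = \exp(\Omega(\sqrt{d}\log dm))$, which combined with the two displays forces $|C'| \geq \exp(\Omega(\sqrt{d}\log dm))$ and contradicts the assumed bound on $S$.

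The main technical knot is threading the parameter $s^\ast$: it must be at least $\Omega(\log S / \log(1/p))$ for the bottom-support reduction to succeed, and at most roughly $(d-k)/(kr)$ for the binomial ratio in the SED comparison to remain exponentially large. The choice $p = (dm)^{-0.1}$, fixed in the very definition of $\NWL$, is calibrated exactly so that a single threshold $s^\ast = \Theta(\sqrt{d})$ sits in both windows, and the factor $\log(1/p) = \Theta(\log dm)$ is precisely what produces the $\log dm$ factor in the final exponent.
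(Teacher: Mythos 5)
Your proposal is correct and follows essentially the same route as the paper: both apply a random restriction to simultaneously kill high-support bottom gates and reduce $\NWL$ to $\NW$, then compare the shifted-evaluation-dimension lower bound for $\NW$ (via \autoref{lem: NW complexity lower bound} and \autoref{lem: complexity measure utility}) against the upper bound from \autoref{lem: depth 4 circuits sed-low bottom support}. Your explicit discussion of how $p = (dm)^{-0.1}$ threads the needle between the bottom-support reduction and the binomial comparison is a clearer exposition of the same constant-tuning that the paper carries out with fixed numerical constants.
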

\begin{proof}
If the size of $C$ is larger than $\exp\left(\frac{\sqrt{d}\log dm}{1000r}  \right)$, then we are already done, else the size of $C$ is at most $\exp\left(\frac{\sqrt{d}\log dm}{1000r}  \right)$. 
Let us set every variable in $C$ and $\NWL_{d,m,e}$ to zero independently with a probability $1-(md)^{-0.1}$. The following claim easily follows via a standard application of the union bound. 
\begin{claim}
With probability at least $1-o(1)$ over the random restrictions as defined above, every product gate at the bottom level of $C$ with support at least $\frac{\sqrt{d}}{100r}$ is set to zero. 
\end{claim}
From the above claim and from \autoref{lem: robustness under random restrictions}, it follows that there is a $\SPSP$ circuit $C'$ of formal degree $d$ over $\F$ which is functionally equivalent to $\NW_{d,m,e}$. Let us relabel the variables as $\vecy$ and $\vecz$ as described in \autoref{lem: psd lower bound for nw}. Let $k = \sqrt{d}$ and let $\ell = \frac{n_z}{2}\cdot (1-\epsilon)$ where $\epsilon = O\left( \frac{\log d}{\sqrt{d}}\right)$ to be chosen shortly. By \autoref{lem: NW complexity lower bound}, we know that for this choice of  $k$ and $\ell$ 
\begin{eqnarray*}
\Gamma_{k,\ell}^{\mathrm{SED}}(\NW_{d,m,e}(\vecy, \vecz)) &\spaced{\geq}& \binom{n_z}{\ell + d - k} \cdot \exp(-O(\log^2 d))\\
&\spaced{\geq}& \binom{n_z}{\ell} \cdot (1+\epsilon)^{2d-2k} \cdot \exp(-O(\log^2 d))
\end{eqnarray*}
Moreover, by \autoref{lem: depth 4 circuits sed-low bottom support}, we know that
\begin{eqnarray*}
\Gamma_{k,\ell}^{\mathrm{SED}}(C') &\spaced{\leq}&  (dm)^{\sqrt{d}/1000r} \cdot \binom{O\left( \frac{\sqrt{d}}{r} \right) + kr }{kr} \cdot \binom{n_z}{{\ell + k\cdot r\cdot \frac{\sqrt{d}}{100r}}}\cdot \poly(n_z)\\
&\spaced{\leq}& (dm)^{\sqrt{d}/1000r} \cdot 2^{O(\sqrt{d})} \cdot \binom{n_z}{\ell} \cdot (1+\epsilon)^{\frac{d}{50}} \cdot \exp(O(\log^2 d))\\
&\spaced{\leq}& \exp{\left({\sqrt{d}\log d/100r}\right)} \cdot 2^{O(\sqrt{d})} \cdot \binom{n_z}{\ell} \cdot (1+\epsilon)^{\frac{d}{50}} \cdot \exp(O(\log^2 d))
\end{eqnarray*}
Now, observe that there exists a constant $c$ such that if $\epsilon$ is set to $\frac{c\log d}{\sqrt{d}}$, then 
\[
\Gamma_{k,\ell}^{\mathrm{SED}}(\NW_{d,m,e}) > \Gamma_{k,\ell}^{\mathrm{SED}}(C')
\]
But this is a contradiction since $C'$ computes $\NW_{d,m,e}$. This completes the proof. 
\end{proof}

\section{Open problems} 
 We end with some open questions : 
 \begin{itemize}
 \item The main challenge would be to improve \autoref{thm: depth 4 lower bound}, and prove it for the model of sums of powers of low degree polynomials. It is not clear to us if the complexity measure used in this paper would be useful.
 \item The functional lower bounds proved in this paper are for \emph{exact} functional computation. We believe that some of these bounds should also hold in the average case, where the circuit and the polynomial agree on a random point on $\{0,1\}^n$ with a high probability.
 It is not clear to us if the proof techniques in this paper can be adapted to say something in the average case setting. The most natural attempt to generalize the proofs seem to hit a \emph{matrix rigidity} like obstacle.  
 \end{itemize}
\section*{Acknowledgement} 
Part of this work was done while the third author was visiting Rutgers. We are grateful to Eric Allender and DIMACS for funding the visit. We are also grateful to Pravesh Kothari and Madhu Sudan for many helpful conversations.

\bibliographystyle{customurlbst/alphaurlpp}
\bibliography{references}

\appendix

\section{The evaluation perspective on projected shifted partial derivatives}\label{sec:pspd-discussion}

The notion of projected shifted partial derivatives was first introduced by Kayal, Limaye, Saha and Srinivasan \cite{KLSS} in proving lower bounds for homogeneous depth-$4$ circuits. The following is the precise definition they used. 

\begin{definition}[Projected shifted partial derivatives of \cite{KLSS}] Let $k$ and $\ell$ be some parameters. The projected shifted partial derivatives of a polynomial $P(\vecy, \vecz)$, denoted by $\Gamma_{k,\ell}^{\mathrm{PSPD}_0}(P)$, is defined as
\[
\Gamma_{k,\ell}^{\mathrm{PSPD}_0}(P) \spaced{:=} \dim\inbrace{\mathrm{mult}\inparen{\vecz^{=\ell} \partial_{\vecy}^{=k}(P)}}.
\]
where $\mathrm{mult}(f)$ is just the vector of coefficients of all \emph{multilinear} monomials in $f$ in a fixed predefined order. 
\end{definition}

An alternate way to interpret the above definition is to consider the shifted partial derivatives of $P$, and \emph{reduce} them under the relation $x_i^2 = 0$, and only then list the coefficients of the surviving monomials. The rationale for this in \cite{KLSS} was to ensure that non-multilinear terms do not interact with multilinear terms in the shifted partial derivatives of $P$. Hence,
\[
\Gamma_{k,\ell}^{\mathrm{PSPD}_0}(P) \spaced{=} \dim\inbrace{\vecz^{=\ell} \partial_{\vecy}^{=k}(P) \; \mod \setdef{x_i^2}{i\in [n]}}.
\]

Another equally useful definition, which was also employed by Kumar and Saptharishi~\cite{KumarSaptharishi15}, is to reduce the shifted partial derivatives of $P$ with respect to $x_i^2 = x_i$ instead. This also in essence ensures that non-multilinear terms do not interact with the relevant multilinear terms by reducing their degree. We shall denote this by $\Gamma_{k,\ell}^{\mathrm{PSPD}_1}(P)$, which is formally defined to be
\[
\Gamma_{k,\ell}^{\mathrm{PSPD}_1}(P) \spaced{:=} \dim\inbrace{\vecz^{=\ell} \partial_{\vecy}^{=k}(P) \; \mod \setdef{(x_i^2 - x_i)}{i\in [n]}}.
\]
Since any polynomial $f$ has a unique multilinear representation modulo $\setdef{x_i^2 - x_i}{i\in [n]}$, it follows that its evaluations on $\set{0,1}^n$ completely determine the coefficients of the reduced polynomial $f \mod \setdef{x_i^2 - x_i}{i \in [n]}$. Therefore, if $\Gamma_{k,\ell}^{\mathrm{PSPD}}(P)$ is defined as
\[
\Gamma_{k,\ell}^{\mathrm{PSPD}_2}(P) \spaced{:=} \dim\inbrace{\eval_{\set{0,1}^n}\inparen{\vecz^{=\ell} \partial_{\vecy}^{=k}(P)}},
\]
then it follows that 
\[
\Gamma_{k,\ell}^{\mathrm{PSPD}_2}(P) \spaced{=}\Gamma_{k,\ell}^{\mathrm{PSPD}_1}(P). 
\]
Finally, if $P$ was set-multilinear with respect to $\vecx = \vecx_1 \sqcup \cdots \sqcup \vecx_r$ and $\vecy = \vecx_1 \sqcup \cdots \sqcup \vecx_k$, then all partial derivatives of order $k$ with respect to $\vecy$ would be result in polynomials only in $\vecz$. Therefore for such set-multilinear polynomials, 
\begin{eqnarray*}
\Gamma_{k,\ell}^{\mathrm{PSPD}_2}(P) & = &  \dim\inbrace{\eval_{\set{0,1}^n}\inparen{\vecz^{=\ell} \partial_{\vecy}^{=k}(P)}}\\
 & = & \dim\inbrace{\eval_{\set{0}^{n_y} \times \set{0,1}^{n_z}}\inparen{\vecz^{=\ell} \partial_{\vecy}^{=k}(P)}}\\
 & =: & \Gamma_{k,\ell}^{\mathrm{PSPD}}(P) \quad\text{as defined in \autoref{defn:pspd}.}
\end{eqnarray*}
The explicit polynomials for which we shall be show the lower bounds would indeed be set-multilinear and hence there is no loss incurred in restricting to only evaluations on $\set{0}^{n_y} \times \set{0,1}^{n_z}$. \\

\noindent 
For polynomials that are not set-multilinear, clearly
\begin{eqnarray*}
\Gamma_{k,\ell}^{\mathrm{PSPD}_2}(P) & = &  \dim\inbrace{\eval_{\set{0,1}^n}\inparen{\vecz^{=\ell} \partial_{\vecy}^{=k}(P)}}\\
 & \geq & \dim\inbrace{\eval_{\set{0}^{n_y} \times \set{0,1}^{n_z}}\inparen{\vecz^{=\ell} \partial_{\vecy}^{=k}(P)}} \spaced{=:} \Gamma_{k,\ell}^{\mathrm{PSPD}}(P).
\end{eqnarray*}
Hence for the purposes of upper-bounding $\Gamma_{k,\ell}^{\mathrm{PSPD}}()$ for say a term in the circuit computing $P$, taking fewer evaluations only helps.

\end{document}